\documentclass[11pt]{article}

\usepackage[margin=1in]{geometry}
\usepackage{amsmath,amssymb,amsthm,mathtools}
\usepackage{bm}
\usepackage{graphicx}
\usepackage{booktabs}
\usepackage{multirow}
\usepackage{natbib}
\usepackage[hidelinks]{hyperref}
\usepackage{enumitem}
\usepackage{float}
\floatstyle{ruled}
\newfloat{algorithm}{t}{loa}
\floatname{algorithm}{Algorithm}

\graphicspath{{./sim_out/}}
\DeclareMathOperator{\tr}{tr}
\DeclareMathOperator{\vecop}{vec}
\DeclareMathOperator{\diag}{diag}
\DeclareMathOperator{\etr}{etr}

\theoremstyle{plain}
\newtheorem{theorem}{Theorem}

\newtheorem{lemma}{Lemma}

\theoremstyle{definition}

\title{\bf Robust high-dimensional Bayesian regression with non-Gaussian errors under global--local shrinkage priors}

\author{
M. Arashi\thanks{Corresponding author: \texttt{arashi@um.ac.ir}.}\\
\small Department of Statistics, Faculty of Mathematical Sciences,\\
\small Ferdowsi University of Mashhad, Mashhad, Iran
}
\date{}

\begin{document}
\maketitle

\begin{abstract}
Multivariate regression with many correlated responses and many predictors
arises throughout the empirical sciences, yet the workhorse estimators in this
setting inherit two restrictive assumptions, i.e., that the regression coefficients
and the response dependence structure can be recovered under a Gaussian error
law, and that sparsity in the coefficient matrix can be addressed in isolation
from sparsity in the inverse error covariance. Both assumptions are routinely
violated. Macroeconomic indicators and asset returns exhibit heavy tails and
abrupt outliers; gene-expression and other high-throughput measurements are
contaminated and asymmetric. When the errors depart from normality, Gaussian
procedures lose efficiency in estimating the coefficients and, more seriously,
produce biased and non-contracting estimates of the conditional-dependence
graph. We develop a Bayesian framework for robust high-dimensional multivariate
regression that replaces the Gaussian error by a flexible scale--location
mixture --- the multivariate $t$, the skew-normal and the skew-$t$ --- and that
places a horseshoe+ global--local prior simultaneously on the regression
coefficients and on the off-diagonal entries of the error precision matrix,
thereby coupling sparsity in the regression map with sparsity in the
conditional-dependence structure of the responses. Posterior computation
proceeds either by a fully tractable Gibbs sampler, obtained from the
scale-mixture and parameter-expanded representations of the error families, or
by a mean-field variational algorithm that scales to larger problems. On the
theoretical side we establish joint posterior contraction for the coefficient
and precision matrices, selection consistency for both supports, a Kullback--
Leibler risk bound showing that the horseshoe+ prior dominates the horseshoe in
the relevant tail regime, and a bounded-sensitivity property guaranteeing that
a single arbitrarily large outlier exerts vanishing influence under the
$t$ error. A simulation study spanning four error regimes, growing sample
sizes, adversarial contamination and a range of problem dimensions confirms the
predicted behaviour, i.e., the robust estimator matches Gaussian procedures when the
errors are Gaussian and dominates them, often by wide margins, under heavy tails
and skewness. Two applications, to the FRED-MD macroeconomic database and to
daily returns of S\&P~500 constituents, illustrate the method, recovering
interpretable sparse coefficient maps and residual dependence graphs while
automatically down-weighting crisis-period observations.
\end{abstract}

\noindent\textit{Keywords:} graphical horseshoe; heavy-tailed errors; horseshoe+
prior; posterior contraction; precision matrix; robust Bayesian inference;
scale mixtures; variational inference.

\section{Introduction}
\label{sec:intro}

Let $\boldsymbol{Y}\in\mathbb{R}^{n\times q}$ collect $n$ observations on $q$
correlated response variables and let $\boldsymbol{X}\in\mathbb{R}^{n\times p}$
collect the corresponding values of $p$ predictors. The multivariate linear
regression model
\begin{eqnarray}
\boldsymbol{Y} &=& \boldsymbol{X}\boldsymbol{B} + \boldsymbol{E},
\qquad \boldsymbol{e}_i \stackrel{\mathrm{iid}}{\sim} F,
\label{eq:mvreg}
\end{eqnarray}
with coefficient matrix $\boldsymbol{B}\in\mathbb{R}^{p\times q}$ and rows
$\boldsymbol{e}_i^{\top}$ of the error matrix $\boldsymbol{E}$ drawn from a
$q$-variate law $F$ with precision matrix $\boldsymbol{\Omega}=\boldsymbol{\Sigma}^{-1}$,
is among the most heavily used tools in applied statistics. Two features of
contemporary applications strain its classical treatment.
First, the dimensions $p$ and $q$ are frequently large
relative to $n$, so that both $\boldsymbol{B}$ and $\boldsymbol{\Omega}$ must be
regularised, and the scientifically interesting objects are not the full
matrices but their \emph{supports}: which predictors act on which responses, and
which responses remain conditionally dependent once the predictors are
accounted for. Second, the Gaussian error law that underlies the dominant
methodology is, in many of the domains where \eqref{eq:mvreg} is applied,
indefensible. Returns on financial assets are leptokurtic and punctuated by
crashes; macroeconomic series contain recessions, policy shocks and, most
recently, the pandemic dislocation; high-throughput biological assays are
asymmetric and riddled with technical outliers. A handful of grossly atypical
rows of $\boldsymbol{E}$ can, under a Gaussian likelihood, distort the estimate
of $\boldsymbol{B}$ and --- because the same rows enter the residual
cross-products that identify $\boldsymbol{\Omega}$ --- corrupt the entire
conditional-dependence graph.

The literature has addressed the two halves of this problem largely separately.
On the regularisation side, penalised likelihood methods such as the MRCE
estimator of \citet{rothman2010sparse} couple an $\ell_1$ penalty on
$\boldsymbol{B}$ with a graphical-lasso \citep{friedman2008sparse,yuan2007model}
penalty on $\boldsymbol{\Omega}$, and the multivariate spike-and-slab lasso of
\citet{deshpande2019simultaneous} performs simultaneous variable and covariance
selection within a single penalised objective, with posterior contraction and
uncertainty quantification established subsequently by \citet{shen2024posterior}.
In the fully Bayesian stream, global--local shrinkage priors
\citep{carvalho2010horseshoe,polson2010shrink,bhattacharya2015dirichlet} have
proven extremely effective for sparse estimation, and \citet{bai2018high} and
\citet{zhang2019ultra} established high-dimensional posterior consistency for
multivariate regression under such priors, while the graphical horseshoe of
\citet{li2019graphical} and the precision-matrix construction of
\citet{sagar2024precision} brought global--local shrinkage to inverse-covariance
estimation. All of these developments, however, retain the Gaussian error.
On the robustness side, the Bayesian treatment of heavy tails through the
$t$ distribution \citep{lange1989robust,geweke1993bayesian}, the modelling of
skewness \citep{fernandez1998bayesian,sahu2003new,azzalini2003distributions},
and the formal theory of outlier resolution
\citep{ohagan2012bayesian,desgagne2015robustness,gagnon2020new} are mature, but
have not been integrated with high-dimensional joint shrinkage of
$\boldsymbol{B}$ and $\boldsymbol{\Omega}$.

\paragraph{Contributions.}
This paper supplies that integration and studies it both theoretically and
empirically. Our contributions are as follows.
\begin{enumerate}[label=(\roman*),leftmargin=2.2em]
\item We formulate a robust multivariate regression model in which the Gaussian
error is replaced by a scale--location mixture of normals, yielding as special
cases the multivariate $t$, the multivariate skew-normal and the multivariate
skew-$t$ law, all sharing a common precision matrix $\boldsymbol{\Omega}$
(Section~\ref{sec:model}). Each family admits an augmentation that renders the
conditional posteriors conjugate.
\item We place a horseshoe+ prior \citep{bhadra2017horseshoeplus}
simultaneously on the entries of $\boldsymbol{B}$ and on the off-diagonal
entries of $\boldsymbol{\Omega}$, so that the regression map and the
conditional-dependence graph are shrunk jointly. The horseshoe+ is chosen
deliberately, i.e., its heavier tail leaves genuine signals essentially unbiased,
which we show translates into a tighter risk bound and into the robustness
property below.
\item We derive a Gibbs sampler whose every full conditional is available in
closed form (Section~\ref{sec:computation}), exploiting the
Makalic--Schmidt \citep{makalic2016simple} representation of the horseshoe+ and a
column-wise block update for the graphical horseshoe+ in the spirit of
\citet{li2019graphical}. We complement it with a mean-field variational
algorithm that delivers one-to-two orders of magnitude in speed-up.
\item We establish joint posterior contraction of $\boldsymbol{B}$ and $\boldsymbol{\Omega}$ at
near-minimax rates; consistency of the induced variable and edge-selection
rules; a Kullback--Leibler risk bound showing that the horseshoe+ dominates the
horseshoe in the sparse-signal regime; and a bounded-sensitivity theorem showing
that under the $t$ error a single unbounded outlier has asymptotically no effect
on the posterior mean of $\boldsymbol{B}$.
\end{enumerate}

The remainder of the paper is organised as follows.
Section~\ref{sec:prelim} fixes notation and recalls the scale-mixture
representations on which the construction rests. Section~\ref{sec:model}
introduces the model and the joint horseshoe+ prior. Section~\ref{sec:computation}
develops the Gibbs sampler and the variational algorithm. Section~\ref{sec:theory}
contains the theory; all proofs are deferred to the appendix.
Sections~\ref{sec:sim} and \ref{sec:real} report the simulation study and the
applications, and Section~\ref{sec:discussion} concludes. Proofs and additional
derivations are collected in the appendix.

\section{Preliminaries and notation}
\label{sec:prelim}

For a matrix $\boldsymbol{A}$ we write $\boldsymbol{A}^{\top}$ for its transpose,
$\tr(\boldsymbol{A})$ for its trace, $\vecop(\boldsymbol{A})$ for the
column-stacking vectorisation, $|\boldsymbol{A}|$ for the determinant and
$\etr(\boldsymbol{A})=\exp\{\tr(\boldsymbol{A})\}$. The Frobenius norm is
$\|\boldsymbol{A}\|_F=\{\tr(\boldsymbol{A}^{\top}\boldsymbol{A})\}^{1/2}$, the
spectral norm is $\|\boldsymbol{A}\|_2$, and $\otimes$ denotes the Kronecker
product, for which $\vecop(\boldsymbol{A}\boldsymbol{C}\boldsymbol{D})
=(\boldsymbol{D}^{\top}\otimes\boldsymbol{A})\vecop(\boldsymbol{C})$. We write
$\boldsymbol{a}_i^{\top}$ for the $i$th row and $\boldsymbol{a}_{\cdot k}$ for the
$k$th column of a matrix. For symmetric positive-definite
$\boldsymbol{\Sigma}\in\mathbb{R}^{q\times q}$, the matrix-normal density of
$\boldsymbol{E}$ with independent rows of precision
$\boldsymbol{\Omega}=\boldsymbol{\Sigma}^{-1}$ is proportional to
$|\boldsymbol{\Omega}|^{n/2}\etr(-\tfrac12\boldsymbol{E}\boldsymbol{\Omega}
\boldsymbol{E}^{\top})$. The half-Cauchy density on $(0,\infty)$ with scale $b$
is $\mathcal{C}^{+}(x\mid b)=2/[\pi b\{1+(x/b)^2\}]$, and we write
$\mathrm{IG}(a,b)$ for the inverse-gamma and $\mathrm{Ga}(a,b)$ for the gamma law
with shape $a$ and rate $b$.

\subsection{Scale--location mixtures of normals}
\label{sec:mixtures}

The error families we employ all arise as mixtures of a Gaussian over a positive
mixing variable, possibly shifted by a latent truncated-normal term. This is the
structural feature that makes posterior computation tractable. Three cases are
of primary interest.

\emph{Multivariate $t$.} The $q$-variate Student-$t$ law with $\nu$ degrees of
freedom, location $\boldsymbol{0}$ and precision $\boldsymbol{\Omega}$ has the
representation
\begin{eqnarray}
\boldsymbol{e}_i \mid w_i &\sim&
\mathcal{N}_q\!\left(\boldsymbol{0},\, w_i^{-1}\boldsymbol{\Omega}^{-1}\right),
\qquad
w_i \sim \mathrm{Ga}\!\left(\tfrac{\nu}{2},\,\tfrac{\nu}{2}\right),
\label{eq:t-mix}
\end{eqnarray}
so that integrating out the observation-specific weight $w_i$ returns the
$t$ density \citep{lange1989robust,geweke1993bayesian}. Small values of $w_i$
correspond to inflated variance and hence to observations that the model treats
as outliers; the posterior mean of $w_i$ is therefore a natural diagnostic, used
repeatedly in Section~\ref{sec:real}.

\emph{Multivariate skew-normal and skew-$t$.} Following the construction of
\citet{sahu2003new}, asymmetry is introduced through a latent half-normal vector
$\boldsymbol{t}_i$. Writing $\boldsymbol{\delta}=(\delta_1,\dots,\delta_q)^{\top}$
for the skewness parameters and $\boldsymbol{\Delta}=\diag(\boldsymbol{\delta})$,
\begin{eqnarray}
\boldsymbol{e}_i \mid \boldsymbol{t}_i, w_i &\sim&
\mathcal{N}_q\!\left(\boldsymbol{\Delta}\boldsymbol{t}_i,\,
w_i^{-1}\boldsymbol{\Omega}^{-1}\right),
\qquad
\boldsymbol{t}_i \mid w_i \sim
\mathcal{N}_q^{+}\!\left(\boldsymbol{0},\, w_i^{-1}\boldsymbol{I}_q\right),
\label{eq:st-mix}
\end{eqnarray}
where $\mathcal{N}_q^{+}$ denotes the $q$-variate normal truncated to the
positive orthant. Taking $w_i\equiv1$ yields the skew-normal family; taking
$w_i\sim\mathrm{Ga}(\nu/2,\nu/2)$ as in \eqref{eq:t-mix} yields the skew-$t$
family, which simultaneously accommodates heavy tails and asymmetry
\citep{azzalini2003distributions}. The Gaussian model is recovered as
$\boldsymbol{\delta}=\boldsymbol{0}$, $\nu\to\infty$.

A unifying way to read \eqref{eq:t-mix}--\eqref{eq:st-mix} is that, conditional
on the latent quantities $(w_i,\boldsymbol{t}_i)$, model~\eqref{eq:mvreg} is an
ordinary Gaussian multivariate regression with a heteroscedastic row weight and
a row-specific mean shift. Every step of the sampler in
Section~\ref{sec:computation} operates on this conditionally Gaussian layer.

\section{Model and joint horseshoe+ prior}
\label{sec:model}

\subsection{The robust likelihood}

Combining \eqref{eq:mvreg} with the mixture representation, the working
likelihood for the skew-$t$ model is, conditional on the latent variables,
\begin{eqnarray}
\boldsymbol{y}_i \mid \boldsymbol{B},\boldsymbol{\Omega},\boldsymbol{\delta},
w_i,\boldsymbol{t}_i
&\sim&
\mathcal{N}_q\!\left(\boldsymbol{B}^{\top}\boldsymbol{x}_i
+ \boldsymbol{\Delta}\boldsymbol{t}_i,\; w_i^{-1}\boldsymbol{\Omega}^{-1}\right),
\qquad i=1,\dots,n,
\label{eq:lik}
\end{eqnarray}
with the $t$ model obtained by setting $\boldsymbol{\delta}=\boldsymbol{0}$ and
the Gaussian model by additionally fixing $w_i\equiv1$. The degrees of freedom
$\nu$ may be fixed at a small value, providing strong protection against
outliers, or assigned a prior and updated; throughout the paper we fix
$\nu\in\{3,4\}$ to expose the robustness behaviour cleanly, and discuss the
estimation of $\nu$ in Section~\ref{sec:discussion}.

\subsection{Joint horseshoe+ shrinkage}
\label{sec:prior}

The novelty of the prior is that the same global--local mechanism is applied to
two distinct sparse objects. For the regression coefficients we adopt a
horseshoe+ prior \citep{bhadra2017horseshoeplus} on each entry,
\begin{eqnarray}
\beta_{jk}\mid \lambda_{jk},\eta_{jk},\tau
&\sim& \mathcal{N}\!\left(0,\, \tau^2 \lambda_{jk}^2 \eta_{jk}^2\right),
\quad
\lambda_{jk}\sim\mathcal{C}^{+}(0,1),
\quad
\eta_{jk}\sim\mathcal{C}^{+}(0,1),
\label{eq:hsplus-B}
\end{eqnarray}
for $j=1,\dots,p$, $k=1,\dots,q$, with a global scale $\tau\sim\mathcal{C}^{+}(0,1)$.
The horseshoe+ augments the horseshoe with the second half-Cauchy factor
$\eta_{jk}$; the resulting marginal prior on $\beta_{jk}$ has an even taller
spike at the origin and a heavier tail than the horseshoe, the two features that
respectively encourage aggressive shrinkage of noise and near-unbiased treatment
of signal. Figure~\ref{fig:priors}(a) contrasts the marginal densities, and
Figure~\ref{fig:priors}(b) the implied posterior-mean shrinkage profiles, against
the Laplace prior that underlies the lasso: the horseshoe and horseshoe+ track
the $45^{\circ}$ line for large observations, whereas the Laplace imposes a
constant downward bias that never abates.

\begin{figure}[t]
\centering
\includegraphics[width=\textwidth]{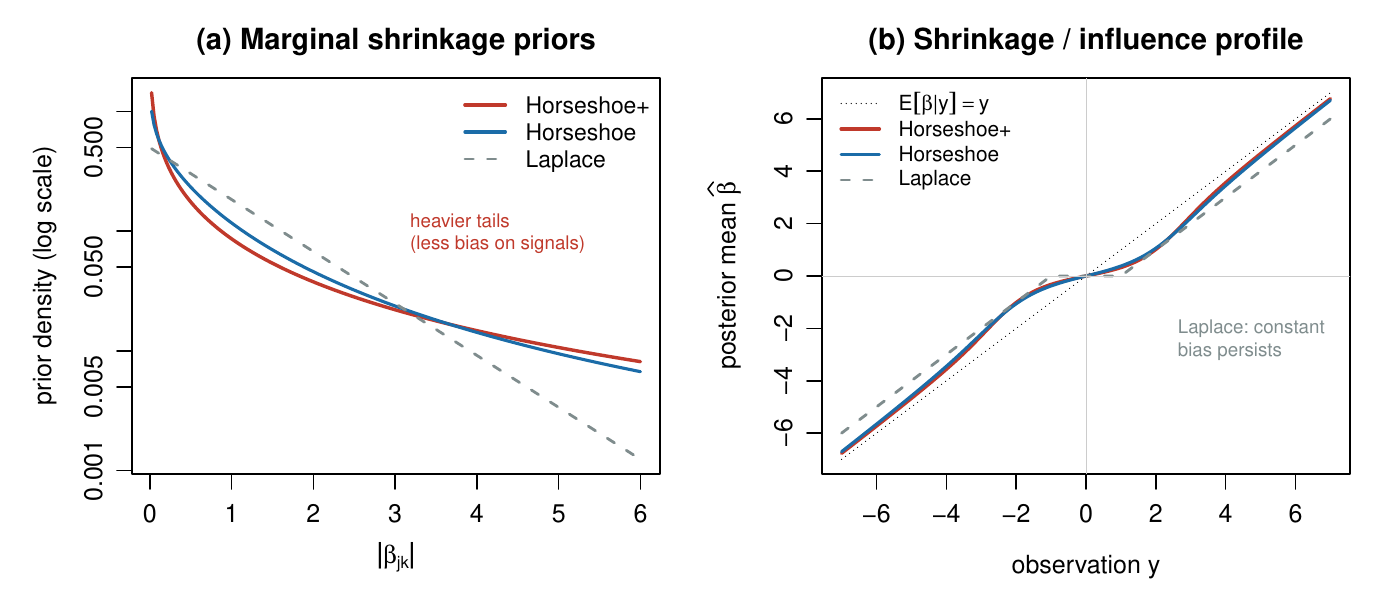}
\caption{(a) Marginal prior densities of a coefficient under the Laplace,
horseshoe and horseshoe+ priors on a logarithmic scale; the horseshoe+ has the
heaviest tail. (b) Implied posterior-mean shrinkage profile $E[\beta\mid y]$ in
the unit-variance normal-means problem. The horseshoe and horseshoe+ leave large
signals essentially unbiased (the curve approaches $E[\beta\mid y]=y$), whereas
the Laplace/lasso retains a constant bias. This contrast underlies both the risk
bound of Theorem~\ref{thm:kl} and the robustness property of
Theorem~\ref{thm:influence}.}
\label{fig:priors}
\end{figure}

For the precision matrix we place a horseshoe+ prior on the off-diagonal entries
and a diffuse exponential prior on the diagonal, following the graphical
horseshoe paradigm of \citet{li2019graphical} but with the horseshoe+ local
structure:
\begin{eqnarray}
\omega_{kl}\mid \gamma_{kl},\zeta_{kl},\rho
&\sim& \mathcal{N}\!\left(0,\, \rho^2 \gamma_{kl}^2 \zeta_{kl}^2\right),
\quad k<l, \qquad
\omega_{kk}\sim\mathrm{Exp}(\theta/2),
\label{eq:hsplus-Omega}
\end{eqnarray}
with $\gamma_{kl},\zeta_{kl}\sim\mathcal{C}^{+}(0,1)$, global scale
$\rho\sim\mathcal{C}^{+}(0,1)$, a fixed rate hyperparameter $\theta>0$ for the
diagonal entries, and the prior restricted to the cone of positive
definite matrices. The construction shrinks weak conditional dependencies to
zero while leaving strong ones intact, and (Section~\ref{sec:real}) recovers
graphs that are both sparse and interpretable.

\begin{figure}[t]
\centering
\includegraphics[width=\textwidth]{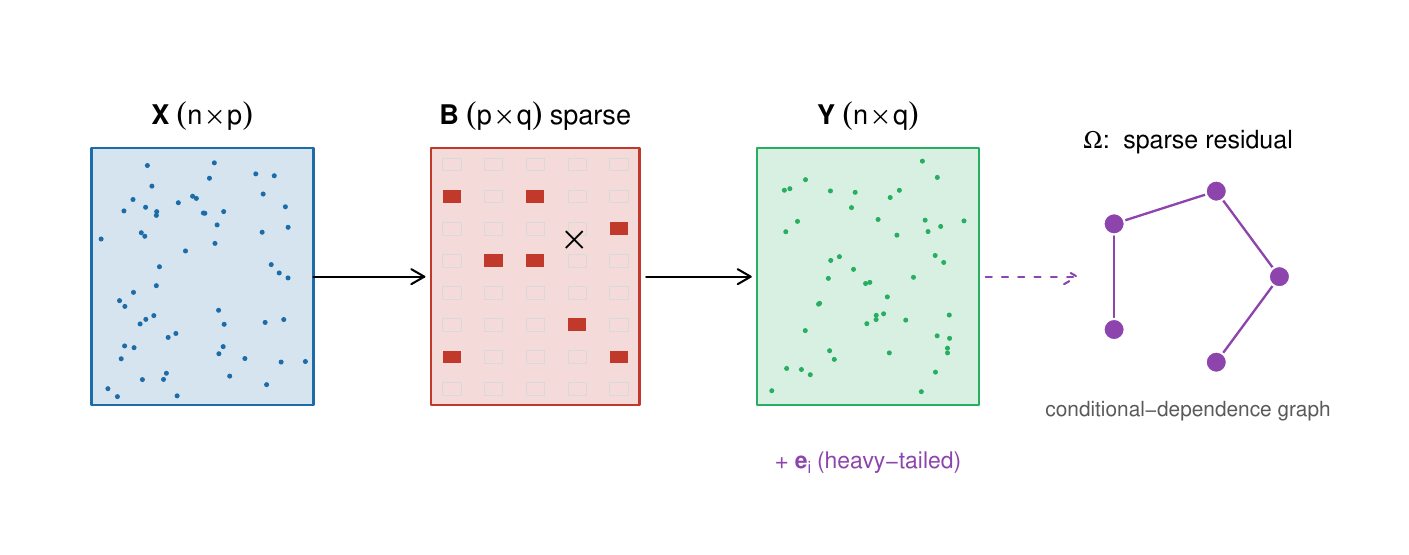}
\caption{Schematic of the model. Predictors $\boldsymbol{X}$ act through a sparse
coefficient matrix $\boldsymbol{B}$ on the responses $\boldsymbol{Y}$; the errors
$\boldsymbol{e}_i$ follow a heavy-tailed scale--location mixture; and the residual
precision $\boldsymbol{\Omega}$ encodes a sparse conditional-dependence graph
among the responses. The horseshoe+ prior is applied jointly to $\boldsymbol{B}$
and to the off-diagonal entries of $\boldsymbol{\Omega}$.}
\label{fig:schematic}
\end{figure}

The local scales $\lambda_{jk},\eta_{jk}$ and $\gamma_{kl},\zeta_{kl}$ adapt to
each coordinate, while the global scales $\tau$ and $\rho$ control overall
sparsity; \citet{piironen2017sparsity} discuss the role of such global scales and
their relationship to the effective number of nonzero coefficients.

\section{Posterior computation}
\label{sec:computation}

\subsection{Gibbs sampler}
\label{sec:gibbs}

We use the standard half-Cauchy-as-inverse-gamma-mixture identity
\citep{makalic2016simple}: if $x^2\mid a\sim\mathrm{IG}(1/2,1/a)$ and
$a\sim\mathrm{IG}(1/2,1)$ then $x\sim\mathcal{C}^{+}(0,1)$. Introducing
auxiliary variables for every half-Cauchy factor turns
\eqref{eq:hsplus-B}--\eqref{eq:hsplus-Omega} into a conditionally Gaussian--
inverse-gamma hierarchy, so that all updates are available in closed form. We
summarise the sampler; the full conditional derivations are given in
Appendix~\ref{app:gibbs}.

\emph{Coefficient matrix.} Conditional on the latent weights, the skewness term
and $\boldsymbol{\Omega}$, write $\boldsymbol{W}=\diag(w_1,\dots,w_n)$ and let
$\boldsymbol{R}=\boldsymbol{Y}-\boldsymbol{T}\boldsymbol{\Delta}$ be the residual
after removing the skewness shift, where $\boldsymbol{T}$ stacks the
$\boldsymbol{t}_i^{\top}$. With $\boldsymbol{\Lambda}_{\boldsymbol{B}}
=\diag\{(\tau^2\lambda_{jk}^2\eta_{jk}^2)^{-1}\}$ the prior precision of
$\vecop(\boldsymbol{B})$, the full conditional is Gaussian,
\begin{eqnarray}
\vecop(\boldsymbol{B})\mid\cdot
&\sim& \mathcal{N}\!\left(\boldsymbol{m}_{\boldsymbol{B}},
\boldsymbol{V}_{\boldsymbol{B}}\right),
\qquad
\boldsymbol{V}_{\boldsymbol{B}}^{-1}
= \boldsymbol{\Omega}\otimes(\boldsymbol{X}^{\top}\boldsymbol{W}\boldsymbol{X})
+ \boldsymbol{\Lambda}_{\boldsymbol{B}},
\label{eq:B-update}
\cr
\boldsymbol{m}_{\boldsymbol{B}}
&=& \boldsymbol{V}_{\boldsymbol{B}}\,
\vecop\!\left(\boldsymbol{X}^{\top}\boldsymbol{W}\boldsymbol{R}\,\boldsymbol{\Omega}\right).
\nonumber
\end{eqnarray}

\emph{Precision matrix.} The off-diagonal horseshoe+ prior makes the graphical
update a sequence of $q$ column-wise Gaussian draws. Let
$\boldsymbol{E}=\boldsymbol{R}-\boldsymbol{X}\boldsymbol{B}
=\boldsymbol{Y}-\boldsymbol{X}\boldsymbol{B}-\boldsymbol{T}\boldsymbol{\Delta}$
denote the full residual and
$\boldsymbol{S}=\boldsymbol{E}^{\top}\boldsymbol{W}\boldsymbol{E}$ its weighted
cross-product. Partitioning $\boldsymbol{\Omega}$ and $\boldsymbol{S}$ with respect
to the $k$th coordinate --- so that $\boldsymbol{\Omega}_{11}$ is the leading
$(q-1)\times(q-1)$ block, $\boldsymbol{\omega}_{12}$ the $(q-1)$-vector of
off-diagonal entries of column $k$ and $\omega_{22}$ the corresponding diagonal
entry, with $\boldsymbol{s}_{12},s_{22}$ the matching blocks of $\boldsymbol{S}$
--- the column is updated by the block scheme of \citet{li2019graphical}, namely
\begin{eqnarray}
\boldsymbol{\omega}_{12}\mid\cdot
&\sim& \mathcal{N}\!\left(-\boldsymbol{C}^{-1}\boldsymbol{s}_{12},\,
\boldsymbol{C}^{-1}\right),
\qquad
\boldsymbol{C} = s_{22}\,\boldsymbol{\Omega}_{11}^{-1}
+ \diag\{(\rho^2\gamma_{kl}^2\zeta_{kl}^2)^{-1}\},
\label{eq:Omega-update}
\cr
\gamma_{k}\mid\cdot &\sim& \mathrm{Ga}\!\left(\tfrac{n}{2}+1,\,
\tfrac{s_{22}}{2}\right),
\qquad
\omega_{22}=\gamma_{k}+\boldsymbol{\omega}_{12}^{\top}
\boldsymbol{\Omega}_{11}^{-1}\boldsymbol{\omega}_{12},
\nonumber
\end{eqnarray}
where $\gamma_{k}>0$ is the Schur complement of the $k$th coordinate (not to be
confused with the local scale $\gamma_{kl}$ in $\boldsymbol{C}$); writing
$\omega_{22}=\gamma_{k}+\boldsymbol{\omega}_{12}^{\top}\boldsymbol{\Omega}_{11}^{-1}
\boldsymbol{\omega}_{12}$ guarantees that the updated $\boldsymbol{\Omega}$ remains
positive definite. The local and global scales follow from the inverse-gamma
augmentation.

\emph{Latent weights and skewness.} The mixing weights have conditionally
independent gamma posteriors,
\begin{eqnarray}
w_i\mid\cdot &\sim&
\mathrm{Ga}\!\left(\tfrac{\nu+q+\mathbb{1}\{\text{skew}\}\,q}{2},\;
\tfrac{\nu + d_i}{2}\right),
\qquad
d_i = (\boldsymbol{y}_i-\boldsymbol{B}^{\top}\boldsymbol{x}_i
-\boldsymbol{\Delta}\boldsymbol{t}_i)^{\top}\boldsymbol{\Omega}
(\boldsymbol{y}_i-\boldsymbol{B}^{\top}\boldsymbol{x}_i
-\boldsymbol{\Delta}\boldsymbol{t}_i) + s_i,
\label{eq:w-update}
\end{eqnarray}
where $s_i=\boldsymbol{t}_i^{\top}\boldsymbol{t}_i$ in the skew-$t$ case and
$s_i=0$ otherwise. Each latent half-normal coordinate $t_{ik}$ is drawn from a
univariate truncated normal, and the skewness parameters $\boldsymbol{\delta}$
have a Gaussian full conditional. The complete algorithm is collected as
Algorithm~\ref{alg:gibbs} in Appendix~\ref{app:gibbs}.

\subsection{Variational inference}
\label{sec:vb}

For settings in which the Gibbs sampler is too costly, we derive a mean-field
variational approximation
$\boldsymbol{q}(\boldsymbol{B},\boldsymbol{\Omega},\boldsymbol{\delta},
\{w_i,\boldsymbol{t}_i\},\cdot)$ that factorises across blocks. The
coordinate-ascent updates mirror the Gibbs full conditionals with expectations
replacing draws; in particular the weights enter only through their means
$E[w_i]=(\nu+q)/(\nu+E[d_i])$, which again identify down-weighted observations.
The algorithm is given in Appendix~\ref{app:vb}. Section~\ref{sec:sim} shows that
it attains one-to-two orders of magnitude in speed-up at accuracy comparable to
the sampler in well-identified regimes.

\section{Theoretical results}
\label{sec:theory}

We study the model in the regime where $p=p_n$ and $q=q_n$ may grow with $n$.
Let $\boldsymbol{B}_0$ and $\boldsymbol{\Omega}_0$ denote the true coefficient and
precision matrices, with row-support size $s_{\boldsymbol{B}}=
\#\{(j,k):\beta_{0,jk}\neq0\}$ and edge-support size
$s_{\boldsymbol{\Omega}}=\#\{(k,l):k<l,\ \omega_{0,kl}\neq0\}$. We impose the
following conditions.
\begin{itemize}[leftmargin=2.4em]
\item[(A1)] \emph{Dimension growth.} $\log(p_n q_n)=o(n)$ and
$(s_{\boldsymbol{B}}+s_{\boldsymbol{\Omega}})\log(p_n q_n)=o(n)$.
\item[(A2)] \emph{Design regularity.} The eigenvalues of
$n^{-1}\boldsymbol{X}^{\top}\boldsymbol{X}$ are bounded away from $0$ and
$\infty$ uniformly in $n$; equivalently there exist
$0<c_1\le c_2<\infty$ with $c_1\le\Lambda_{\min}(n^{-1}\boldsymbol{X}^{\top}
\boldsymbol{X})\le\Lambda_{\max}(n^{-1}\boldsymbol{X}^{\top}\boldsymbol{X})\le c_2$.
\item[(A3)] \emph{Spectrum of the precision.} There exist constants
$0<\underline{k}\le\overline{k}<\infty$ such that
$\underline{k}\le\Lambda_{\min}(\boldsymbol{\Omega}_0)\le
\Lambda_{\max}(\boldsymbol{\Omega}_0)\le\overline{k}$.
\item[(A4)] \emph{Mixing law.} The mixing variable $w$ has a density supported on
$(0,\infty)$ with $E[w]<\infty$ and $E[w^{-1}]<\infty$, and is log-regularly
varying at the origin in the sense of \citet{desgagne2015robustness}; the
$t$ mixing law of \eqref{eq:t-mix} satisfies this for every $\nu>0$.
\item[(A5)] \emph{Signal strength.} The nonzero entries of $\boldsymbol{B}_0$ and
$\boldsymbol{\Omega}_0$ exceed a detection threshold of order
$\sqrt{\log(p_nq_n)/n}$, uniformly in $n$.
\end{itemize}

Conditions (A1)--(A3) and (A5) are standard in the high-dimensional Bayesian
literature \citep{castillo2015bayesian,bai2018high,li2019graphical}. Condition
(A4) is the robustness hypothesis: it is what allows a single outlier to be
absorbed by the mixing weight rather than transmitted to the estimates.

\subsection{Joint posterior contraction}

\begin{theorem}[Joint contraction]
\label{thm:contraction}
Under \textnormal{(A1)--(A5)} and the joint horseshoe+ prior of
Section~\ref{sec:prior}, there is a constant $M>0$ such that, with
$\epsilon_n^2 = (s_{\boldsymbol{B}}+s_{\boldsymbol{\Omega}}+q_n)
\log(p_nq_n)/n$,
\begin{eqnarray}
\Pi\!\left(\|\boldsymbol{B}-\boldsymbol{B}_0\|_F
+ \|\boldsymbol{\Omega}-\boldsymbol{\Omega}_0\|_F > M\epsilon_n
\,\middle|\, \boldsymbol{Y},\boldsymbol{X}\right)
&\longrightarrow& 0
\label{eq:contraction}
\end{eqnarray}
in probability as $n\to\infty$.
\end{theorem}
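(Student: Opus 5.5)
We follow the general posterior-contraction theorem of Ghosal, Ghosh and van der Vaart, applied to the \emph{marginal} likelihood of $\boldsymbol{y}_i$ after integrating out $(w_i,\boldsymbol{t}_i)$. The rows are independent but not identically distributed, because $\boldsymbol{x}_i$ varies. The argument therefore runs in the average squared Hellinger (or Rényi-$1/2$) metric
\[
d_n^2(\theta,\theta_0)=n^{-1}\sum_{i=1}^n h^2\bigl(p_{\theta,i},p_{\theta_0,i}\bigr),
\qquad \theta=(\boldsymbol{B},\boldsymbol{\Omega}).
\]
Three ingredients are needed:
\begin{itemize}[leftmargin=2.2em]
\item[(i)] prior mass of order $e^{-Cn\epsilon_n^2}$ on a Kullback--Leibler neighbourhood of $\theta_0$;
\item[(ii)] a sieve $\Theta_n$ whose complement has prior mass $e^{-(C+4)n\epsilon_n^2}$;
\item[(iii)] exponentially consistent tests on $\Theta_n$, obtained from a bound on the $d_n$-metric entropy of order $n\epsilon_n^2$.
\end{itemize}
Afterwards $d_n$ is converted into the Frobenius loss in \eqref{eq:contraction}. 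Throughout, the skewness vector $\boldsymbol{\delta}$ and $\nu$ are treated as fixed at their true values; a compact prior on $\boldsymbol{\delta}$ only adds $q_n\log n$ to the entropy, which is absorbed by the $q_n$ term in $\epsilon_n^2$.

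\textbf{Step 1: KL and variance bounds via the mixture.} Conditionally on $w$ (and $\boldsymbol{t}$), the two laws are Gaussian with a common mean shift. Joint convexity of KL gives
\[
K\bigl(p_{\theta_0,i},p_{\theta,i}\bigr)\le E_w\, K\bigl(\mathcal{N}_{\theta_0,w},\mathcal{N}_{\theta,w}\bigr).
\]
The Gaussian KL is bounded by $C\bigl\{\|\boldsymbol{\Omega}-\boldsymbol{\Omega}_0\|_F^2+E[w]\,\|(\boldsymbol{B}-\boldsymbol{B}_0)^\top\boldsymbol{x}_i\|^2\bigr\}$ whenever the eigenvalues of $\boldsymbol{\Omega}$ lie in a neighbourhood of $[\underline{k},\overline{k}]$; this uses (A3). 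The second moment of the log-likelihood ratio is handled similarly using $E[w]<\infty$ from (A4). Summing over $i$ and applying (A2), the KL neighbourhood contains
\[
\bigl\{\|\boldsymbol{B}-\boldsymbol{B}_0\|_F\le c\,\epsilon_n,\ \|\boldsymbol{\Omega}-\boldsymbol{\Omega}_0\|_F\le c\,\epsilon_n\bigr\}.
\]

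\textbf{Step 2: prior concentration.} We lower bound the horseshoe+ mass of this Frobenius ball, following the horseshoe arguments of \citet{bai2018high} and \citet{li2019graphical}. The global scales are restricted to $\tau,\rho\in[(s/pq)^{2}, s/pq]$, an interval of prior probability at least $(pq)^{-c}$.
\begin{itemize}[leftmargin=2.2em]
\item \emph{Signal coordinates.} The horseshoe+ marginal density is bounded below polynomially on bounded sets. Hence each of the $s_{\boldsymbol{B}}+s_{\boldsymbol{\Omega}}$ nonzero entries falls within $\epsilon_n/\sqrt{s}$ of its truth with probability at least $e^{-c\log(pq)}$.
\item \emph{Null coordinates.} The tall spike gives $\Pi(|\beta|>\epsilon_n/pq\mid\tau)\lesssim \tau\log(1/\tau)\,pq/\epsilon_n$, so all $pq$ nulls are simultaneously small with probability bounded below.
\item \emph{Diagonal.} The $q_n$ diagonal entries carry exponential priors, contributing $e^{-cq_n\log n}$.
\end{itemize}
The positive-definite truncation only helps, since the constrained normalising constant is at most one and $\boldsymbol{\Omega}_0$ is interior by (A3). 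Altogether the mass is at least $\exp\{-c(s_{\boldsymbol{B}}+s_{\boldsymbol{\Omega}}+q_n)\log(p_nq_n)\}$, which matches $n\epsilon_n^2$ and explains the extra $q_n$ in the rate.

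\textbf{Step 3: sieve, tests, and metric conversion.} The sieve $\Theta_n$ consists of matrices with at most $\bar s=L(s_{\boldsymbol{B}}+s_{\boldsymbol{\Omega}})$ entries exceeding $\epsilon_n/(p_nq_n)$ in absolute value. It also requires all entries to be bounded by $e^{n\epsilon_n^2}$ and the eigenvalues of $\boldsymbol{\Omega}$ to lie in $[e^{-n\epsilon_n^2/q_n},\, e^{n\epsilon_n^2/q_n}]$ or a similar polynomial window.
\begin{itemize}[leftmargin=2.2em]
\item \emph{Complement mass.} A binomial tail bound for the "effective support", combined with the polynomial tails of the horseshoe+ and the exponential diagonal prior, gives $\Pi(\Theta_n^c)\le e^{-(C+4)n\epsilon_n^2}$.
\item \emph{Entropy.} Covering numbers are of order $\binom{pq}{\bar s}(\cdot)^{\bar s+q}$, so their logarithm is $\lesssim n\epsilon_n^2$.
\item \emph{Tests.} Hellinger tests à la Le Cam--Birgé exist for independent non-identically-distributed observations.
\end{itemize}
This yields $d_n(\theta,\theta_0)\lesssim\epsilon_n$ with posterior probability tending to one. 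The final and, we expect, hardest step is to invert this to Frobenius loss. For Gaussian errors $h^2$ is locally equivalent to the squared Frobenius distance. Here the mixture marginal is not Gaussian, so a lower bound of the form
\[
h^2\bigl(p_{\theta,i},p_{\theta_0,i}\bigr)\ge c\,\min\Bigl\{1,\ \|(\boldsymbol{B}-\boldsymbol{B}_0)^\top\boldsymbol{x}_i\|^2+\|\boldsymbol{\Omega}-\boldsymbol{\Omega}_0\|_F^2\Bigr\}
\]
is needed. We would derive it via identifiability of location--scatter in elliptical (or canonical skew) families: the characteristic function at frequency $u$ equals $\phi\bigl(u^\top\boldsymbol{\Omega}^{-1}u\bigr)e^{iu^\top\mu}$ times a skew factor. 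A local Taylor argument for small differences, together with the eigenvalue bounds from the sieve and $E[w^{-1}]<\infty$ from (A4), controls the constants. Restricting to the sieve keeps $\boldsymbol{\Omega}$ well conditioned, so the $\min\{1,\cdot\}$ is inactive once $d_n\to0$. Finally, (A2) lets us pass from $n^{-1}\sum_i\|(\boldsymbol{B}-\boldsymbol{B}_0)^\top\boldsymbol{x}_i\|^2$ to $c_1\|\boldsymbol{B}-\boldsymbol{B}_0\|_F^2$, which gives \eqref{eq:contraction}. Condition (A5) is not needed for contraction itself and is reserved for the selection results.
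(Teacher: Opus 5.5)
You use the same three-step Ghosal--Ghosh--van der Vaart scheme as the paper. Several of your choices are more careful than the paper's sketch: you use the marginal rather than the augmented likelihood, the average Hellinger metric for non-identical rows, and an effective-support sieve. Two steps nonetheless fail.

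\textbf{First gap: the sieve complement bound.} A binomial tail argument needs each $|\beta_{jk}|$ to exceed $\epsilon_n/(p_nq_n)$ with prior probability at most $(p_nq_n)^{-1-u}$, which holds only when $\tau$ is tiny. Under $\tau\sim\mathcal{C}^{+}(0,1)$, however, $\Pi(1/2\le\tau\le2)>0.4$. Given such $\tau$, the horseshoe+ small-ball probability $\Pi(|\beta_{jk}|\le t\mid\tau)$ is only $O(t\log^2(1/t))$, so essentially all $p_nq_n$ entries exceed the threshold. Hence $\Pi(\Theta_n^c)$ stays bounded away from zero once $p_nq_n\gg\bar s$. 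Neither $\Pi(\Theta_n^c)\le e^{-(C+4)n\epsilon_n^2}$ nor the ratio form $\Pi(\Theta_n^c)/\Pi(B_n)=o(e^{-2n\epsilon_n^2})$ can hold, and a different sieve will not help, because that prior mass sits on dense configurations.

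The missing ingredient is control of the \emph{posterior} of the global scales. You would need to show, from the marginal likelihood of $\tau$, that $\Pi(\tau>\tau_n\mid\boldsymbol{Y},\boldsymbol{X})\to0$, and similarly for $\rho$; van der Pas, Szab\'o and van der Vaart do this for the full-Bayes horseshoe in the normal-means model. The alternative is to change the prior so that $\tau,\rho$ are fixed or confined to a shrinking window. The paper's Step~2 asserts the same bound via a union bound over exact supports, which is empty under a continuous prior, so it does not supply this either.

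Separately, your Step~2 window $\tau\in[(s/pq)^2,s/pq]$ is too wide. At its upper end your own bound gives a per-null exceedance probability of order $s\log(pq/s)/\epsilon_n\gg1$. You need $\tau\lesssim\epsilon_n(p_nq_n)^{-2}$ up to logarithms, which is still affordable.

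\textbf{Second gap: the Hellinger-to-Frobenius inversion under $t$-type errors when $q_n\to\infty$.} Take $\boldsymbol{B}=\boldsymbol{B}_0$ and $\boldsymbol{\Omega}=(1+\delta)\boldsymbol{\Omega}_0$.
\begin{itemize}
\item The likelihood ratio depends on $\boldsymbol{y}_i$ only through $T_i=\boldsymbol{e}_i^{\top}\boldsymbol{\Omega}_0\boldsymbol{e}_i$. Under $\boldsymbol{\Omega}_0$, $\log T_i$ is distributed as $\log\chi^2_{q_n}-\log w$; under $\boldsymbol{\Omega}$ that law is shifted by $-\log(1+\delta)$.
\item Convolution cannot increase Fisher information. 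So the location information of $\log T_i$ is at most that of $\log w$, namely $\nu/2$, for every $q_n$. Hence $h^2\lesssim\nu\delta^2$, whereas $\|\boldsymbol{\Omega}-\boldsymbol{\Omega}_0\|_F^2\ge\underline{k}^2q_n\delta^2$.
\item Your displayed lower bound therefore cannot hold with $c$ free of $q_n$: the mixing weight is confounded with the overall scale of $\boldsymbol{\Omega}$. The Hellinger route only gives $\|\boldsymbol{\Omega}-\boldsymbol{\Omega}_0\|_F\lesssim\sqrt{q_n}\,\epsilon_n$ along this direction.
\item The $n$-fold affinity here is about $e^{-cn\nu\delta^2}$. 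At $\delta\asymp\epsilon_n/\sqrt{q_n}$, no test has errors below roughly $e^{-cn\epsilon_n^2/q_n}$, which is too weak to beat the prior-mass denominator $e^{-Cn\epsilon_n^2}$. This one-dimensional direction needs its own argument.
\end{itemize}

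Finally, ``the $\min\{1,\cdot\}$ is inactive once $d_n\to0$'' does not follow from the conditioning of $\boldsymbol{\Omega}$. A small average over $i$ does not exclude a few rows with $\|(\boldsymbol{B}-\boldsymbol{B}_0)^{\top}\boldsymbol{x}_i\|\ge1$. You also need control of $\max_i\|\boldsymbol{x}_i\|$ and of $\boldsymbol{B}-\boldsymbol{B}_0$ on the sieve.
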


The rate $\epsilon_n$ matches, up to constants, the rates obtained for Gaussian
multivariate regression under global--local priors \citep{bai2018high,
zhang2019ultra} and for the graphical horseshoe \citep{li2019graphical}; the
contribution of Theorem~\ref{thm:contraction} is that the rate is retained
\emph{jointly} and \emph{under the non-Gaussian error}. The proof, in
Appendix~\ref{app:contraction}, verifies the prior-mass and testing conditions of
the general contraction theorem of \citet{ghosal2000convergence} for the
augmented model, controlling the mixing weights through (A4).

\subsection{Selection consistency}

\begin{theorem}[Selection consistency]
\label{thm:selection}
Under the conditions of Theorem~\ref{thm:contraction}, let
$\widehat{S}_{\boldsymbol{B}}$ and $\widehat{S}_{\boldsymbol{\Omega}}$ be the
supports selected by thresholding the marginal posterior credible intervals at
the $50\%$ level. Then
\begin{eqnarray}
\Pi\!\left(\widehat{S}_{\boldsymbol{B}}=S_{\boldsymbol{B}},\;
\widehat{S}_{\boldsymbol{\Omega}}=S_{\boldsymbol{\Omega}}
\,\middle|\,\boldsymbol{Y},\boldsymbol{X}\right)\longrightarrow 1
\end{eqnarray}
in probability as $n\to\infty$.
\end{theorem}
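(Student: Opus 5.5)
The plan is to deduce selection consistency from the joint contraction of Theorem~\ref{thm:contraction}, combined with the signal-strength condition (A5) and a sharper, coordinatewise control of the marginal posteriors. The selected support $\widehat{S}_{\boldsymbol{B}}$ includes $(j,k)$ when the $50\%$ marginal credible interval for $\beta_{jk}$ excludes zero, and $\widehat{S}_{\boldsymbol{\Omega}}$ is defined analogously. Selection is correct exactly when two things hold at once: every true signal has an interval bounded away from zero, and every null coordinate has an interval containing zero. A union bound splits the event $\{\widehat{S}_{\boldsymbol{B}}\neq S_{\boldsymbol{B}}\}\cup\{\widehat{S}_{\boldsymbol{\Omega}}\neq S_{\boldsymbol{\Omega}}\}$ into a \emph{false-negative} part over signal coordinates and a \emph{false-positive} part over null coordinates. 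I would treat the two parts separately and show that each has posterior probability tending to zero in probability.

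\emph{False negatives.} For each signal coordinate, the contraction bound $\|\boldsymbol{B}-\boldsymbol{B}_0\|_F\le M\epsilon_n$ alone is too coarse, since $\epsilon_n$ carries the factor $(s_{\boldsymbol{B}}+s_{\boldsymbol{\Omega}}+q_n)$. I would therefore first upgrade Theorem~\ref{thm:contraction} to an entrywise (max-norm) contraction at rate $\sqrt{\log(p_nq_n)/n}$. The route is to condition on the latent weights $w_i$ and skewness terms, which makes the model conditionally Gaussian, as noted in Section~\ref{sec:mixtures}. Given $\boldsymbol{\Omega}$ in the contraction neighbourhood, the conditional posterior of $\vecop(\boldsymbol{B})$ in \eqref{eq:B-update} has covariance whose diagonal is of order $1/n$, by (A2), (A3) and the moment bounds on $w$ in (A4). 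Sub-Gaussian concentration, together with a union bound over the $p_nq_n$ entries, then gives a maximal entrywise deviation of $O_P(\sqrt{\log(p_nq_n)/n})$. For $\boldsymbol{\Omega}$ I would carry out the same argument through the column-wise Gaussian update \eqref{eq:Omega-update}, using $s_{22}\asymp n$.

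With the detection threshold in (A5) taken as a sufficiently large multiple of this rate, the marginal posterior of each signal coordinate then concentrates on the side of zero given by its sign. Its $50\%$ interval therefore excludes zero, simultaneously for all signals, with posterior probability tending to one.

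\emph{False positives.} This is the step I expect to be the main obstacle. Contraction bounds each null entry by $O(\sqrt{\log(p_nq_n)/n})$, but that does not force its $50\%$ interval to contain zero. One has to show that the posterior of a null entry is roughly centred near zero, so that its median lies within the interval's half-width of the origin. Here I would exploit the horseshoe+ spike. The plan is to show that the global scales satisfy $\tau,\rho\lesssim (s/(pq))^{c}$ with high posterior probability, following the argument of \citet{bai2018high} and \citet{li2019graphical}. Given this, the conditional posterior of a null $\beta_{jk}$ has the form "noise of size $n^{-1/2}$ shrunk by a factor $\kappa_{jk}\to1$". Its posterior mean is then $o(n^{-1/2})$, while its posterior spread remains of the same order, so the median-centred interval covers zero.

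A union bound over the at most $p_nq_n$ null coordinates, using the sub-Gaussian tails of the conditionally Gaussian layer and the condition $\log(p_nq_n)=o(n)$ from (A1), controls all of them simultaneously. Finally, I would integrate out the latent weights using (A4), which ensures that heavy-tailed residuals cannot inflate the effective noise level. Combining the two parts then yields the stated result.
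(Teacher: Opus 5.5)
Your two departures from the paper's proof are well founded, and your fix for the signals is the right idea. Your fix for the nulls, however, does not go through.

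\emph{Where you improve on the paper.} The paper handles signals by asserting that (A5) makes every nonzero entry exceed $2M\epsilon_n$, so that Frobenius contraction suffices. But (A5) only supplies a threshold of order $\sqrt{\log(p_nq_n)/n}$, which is smaller than $\epsilon_n$ by the factor $\sqrt{s_{\boldsymbol{B}}+s_{\boldsymbol{\Omega}}+q_n}$, so an entrywise rate is genuinely needed. For nulls, the paper says the posterior puts more than half its mass near zero. As you note, this does not place zero inside the $50\%$ interval; that requires $\Pi(\beta_{jk}\le 0\mid\boldsymbol{Y},\boldsymbol{X})\in[1/4,3/4]$.

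\emph{The gap in your null step.} Given its scales, a null coordinate in the conditionally Gaussian layer is roughly $\mathcal{N}\bigl((1-\kappa_{jk})\tilde\beta_{jk},\,(1-\kappa_{jk})v_{jk}\bigr)$. Here $\tilde\beta_{jk}$ is the unshrunk weighted least-squares value and $v_{jk}\asymp 1/n$. Shrinking the mean by $1-\kappa_{jk}$ shrinks the standard deviation by $\sqrt{1-\kappa_{jk}}$, so the spread does not stay of order $n^{-1/2}$. What matters is the standardized location $\sqrt{1-\kappa_{jk}}\,z_{jk}$, with $z_{jk}=|\tilde\beta_{jk}|/\sqrt{v_{jk}}$. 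It must lie below $\Phi^{-1}(3/4)\approx 0.67$ on most of the posterior mass of $\kappa_{jk}$.

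Your union bound runs over all $p_nq_n$ nulls, so it must cover $z_{jk}\approx\sqrt{2\log(p_nq_n)}$. There you need $1-\kappa_{jk}=o(1/\log(p_nq_n))$. For such $z$ the posterior of $\kappa$ is bimodal. A normal-means calculation gives posterior odds of the unshrunk mode of about $\tau e^{z^2/2}/z^2$, times a $\log(1/\tau)$ factor for the horseshoe+.

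Let $s$ denote the relevant sparsity ($s_{\boldsymbol{B}}$ for $\tau$, $s_{\boldsymbol{\Omega}}$ for $\rho$). A half-Cauchy global scale settles, up to logarithms, at $\tau\asymp s/(p_nq_n)$. This is because the signals contribute roughly $s\log\tau$ to its log marginal likelihood and the nulls roughly $-p_nq_n\tau$. At that level, the expected number of nulls with odds above one is of order $s$ divided by a power of $\log(p_nq_n)$. Each such null has a $50\%$ interval excluding zero. This count diverges whenever the sparsity grows faster than that power, which (A1) allows.

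So a bound $\tau,\rho\lesssim(s/(p_nq_n))^{c}$ only helps with $c>1$. That is not available for the half-Cauchy hyperpriors of Section~\ref{sec:prior}, and the results you cite do not supply it. Closing the argument needs an extra hypothesis, for example:
\begin{itemize}
\item sparsity growing at most polylogarithmically; or
\item global scales fixed or truncated below $(p_nq_n)^{-1-\delta}$, together with a larger constant in (A5).
\end{itemize}
The paper's proof has the same hole, hidden in its sentence about more than half the mass lying near zero.

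\emph{A smaller gap in your signal step.} The diagonal of $\boldsymbol{V}_{\boldsymbol{B}}$ in \eqref{eq:B-update} controls the spread of $\boldsymbol{B}$ about $\boldsymbol{m}_{\boldsymbol{B}}$. It does not control $\|\boldsymbol{m}_{\boldsymbol{B}}-\boldsymbol{B}_0\|_{\max}$. For that you need three further ingredients:
\begin{itemize}
\item The shrinkage bias on signal coordinates must be negligible, i.e.\ their posterior local scales must not collapse. This is where the horseshoe+ tail enters.
\item You need a max-norm bound on $(\boldsymbol{X}^{\top}\boldsymbol{W}\boldsymbol{X})^{-1}\boldsymbol{X}^{\top}\boldsymbol{W}\boldsymbol{E}$ when $\boldsymbol{E}$ is heavy-tailed and the weights are data-dependent.
\item You need $\boldsymbol{X}^{\top}\boldsymbol{W}\boldsymbol{X}\succeq cn\boldsymbol{I}$. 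Condition (A2) plus prior moments of $w$ do not deliver this when some posterior weights are near zero.
\end{itemize}
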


\subsection{Risk advantage of the horseshoe+}

The next result formalises the intuition of Figure~\ref{fig:priors}. Consider the
canonical sparse normal-means problem embedded in a single coordinate of
$\boldsymbol{B}$, $y=\beta_0+\varepsilon$ with $\varepsilon\sim\mathcal{N}(0,1)$,
and let $\widehat{\beta}_{\mathrm{HS}}$ and $\widehat{\beta}_{\mathrm{HS}^{+}}$
be the posterior means under the horseshoe and horseshoe+ priors.

\begin{theorem}[Risk dominance of the horseshoe+ on signals]
\label{thm:kl}
Consider the normal-means problem with horseshoe and horseshoe+ priors under a
common global scale, and let
\begin{eqnarray*}
R(\widehat{\beta})=E_{\beta_0}\,\mathrm{KL}\{\mathcal{N}(\beta_0,1)\,\|\,
\mathcal{N}(\widehat{\beta},1)\}=\tfrac12\,E_{\beta_0}(\widehat{\beta}-\beta_0)^2	
\end{eqnarray*}
denote the Kullback--Leibler risk, the expectation being over
$y\sim\mathcal{N}(\beta_0,1)$. The posterior-mean biases on a fixed signal satisfy,
as $|y|\to\infty$,
\begin{eqnarray}
\bigl|\,E[\beta\mid y]-y\,\bigr|_{\mathrm{HS}}
&=& \Theta\!\left(\frac{1}{|y|}\right),
\qquad
\bigl|\,E[\beta\mid y]-y\,\bigr|_{\mathrm{HS}^{+}}
\;=\; \Theta\!\left(\frac{1}{|y|\,\log|y|}\right),
\label{eq:biasrates}
\end{eqnarray}
so that the horseshoe+ bias is smaller than the horseshoe bias by a factor of
order $\log|y|$. Consequently there is a threshold $b_0<\infty$ such that, for
every true signal with $|\beta_0|\ge b_0$,
\begin{eqnarray}
R(\widehat{\beta}_{\mathrm{HS}^{+}})
&<& R(\widehat{\beta}_{\mathrm{HS}}).
\end{eqnarray}
\end{theorem}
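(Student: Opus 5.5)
The plan is to work through Tweedie's formula in the unit-variance normal-means problem. Let $m(y)=\int\phi(y-\beta)\,\pi(\beta)\,d\beta$ be the marginal density under a prior $\pi$. Then $E[\beta\mid y]-y=\frac{d}{dy}\log m(y)$, so both rates in \eqref{eq:biasrates} reduce to the asymptotics of the score $m'(y)/m(y)$ as $|y|\to\infty$.

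\emph{Step 1: scale-mixture form.} Fix the common global scale at $\tau=1$ without loss of generality. Write both priors as $\beta\mid\kappa\sim\mathcal N(0,\kappa^{-1}-1)$, with the shrinkage weight $\kappa=1/(1+\lambda^2)$ for the horseshoe and $\kappa=1/(1+\lambda^2\eta^2)$ for the horseshoe+. Then
\[
E[\beta\mid y]-y=-y\,E[\kappa\mid y],\qquad
p(\kappa\mid y)\propto \kappa^{1/2}e^{-\kappa y^2/2}\,g(\kappa).
\]
Here $g$ is the prior density of $\kappa$. For the horseshoe, $g(\kappa)\propto\kappa^{-1/2}(1-\kappa)^{-1/2}$. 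For the horseshoe+, integrating out $\eta$ gives
\[
g(\kappa)\propto \kappa^{-1/2}(1-\kappa)^{-1/2}\,\frac{\log\{(1-\kappa)/\kappa\}}{1-2\kappa}.
\]
The bias is therefore $|y|\,E[\kappa\mid y]$, which is governed by the behaviour of $g$ near $\kappa=0$.

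\emph{Step 2: Laplace-type expansion as $|y|\to\infty$.} Substitute $u=\kappa y^2$, so the posterior mass concentrates on $\kappa=O(y^{-2})$.

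For the horseshoe, $g(\kappa)\sim c\,\kappa^{-1/2}$ near $0$. This gives
\[
E[\kappa\mid y]\sim \frac{\int_0^\infty u\,e^{-u/2}\,du}{y^2\int_0^\infty e^{-u/2}\,du}=\frac{2}{y^2},
\]
so the horseshoe bias is $2/|y|$, which is $\Theta(1/|y|)$. This is the first half of \eqref{eq:biasrates}.

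For the horseshoe+, the extra factor is $\log(1/\kappa)=\log y^2-\log u$. The plan is to expand the ratio of the two integrals in powers of $1/\log|y|$, using dominated convergence with the weight $e^{-u/2}$ and the correction $\log u/\log y^2$. From this expansion I must extract the order of $E[\kappa\mid y]$ and show it is $\Theta(1/(y^2\log|y|))$, which gives the second half of \eqref{eq:biasrates}.

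\emph{Main obstacle.} The crux is this last order claim. A naive expansion keeps a leading $\Theta(1/y^2)$ term coming from the $\kappa^{1/2}$ likelihood factor. The statement requires that this term is absorbed by the logarithmic factor, leaving only the $1/\log|y|$ correction. I would first attempt to establish this through the equivalent tail statement $\pi_{\mathrm{HS}^+}(\beta)\asymp\log|\beta|/\beta^2$. Combining it with $m\sim\pi$ and $m'/m\sim\pi'/\pi$ (valid for regularly varying tails), I would compute the score directly.

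I expect this to be exactly the delicate point. The factor $\beta^{-2}$ contributes $-2/y$ to the score, and it is not clear that this contribution cancels. So before relying on the second rate in the risk step, I would check it carefully, if necessary by explicitly tracking the $\kappa\to0$ expansion of $g$ to second order.

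\emph{Step 3: risk comparison.} For $|\beta_0|\ge b_0$, decompose each risk as
\[
E_{\beta_0}(\widehat\beta-\beta_0)^2=1+2E_{\beta_0}\{\varepsilon\,b(y)\}+E_{\beta_0}\,b(y)^2,
\]
where $b(y)=E[\beta\mid y]-y$. On the event $|\varepsilon|\le|\beta_0|/2$, Step 2 applies uniformly in $y$, and the complement has Gaussian-small probability. Since $|b|$ is globally bounded (Tweedie plus bounded score), the complement contributes negligibly.

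The term $E\,b^2$ is of order $1/\beta_0^2$ for the horseshoe but smaller for the horseshoe+ by the $\log$ factor of \eqref{eq:biasrates}. The cross term equals $2E\,b'(y)$ by Stein's identity. Its order must be compared using the differentiated expansions from Step 2.

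Choosing $b_0$ large enough that the difference in $E\,b^2$ dominates all remainders yields $R(\widehat\beta_{\mathrm{HS}^+})<R(\widehat\beta_{\mathrm{HS}})$.
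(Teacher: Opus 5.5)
Your horseshoe computation in Step~2 is fine, and your suspicion at the ``main obstacle'' is correct. It cannot be overcome: the horseshoe+ half of \eqref{eq:biasrates} is false, so Step~2 has no valid completion. Push your own substitution through. Set $u=\kappa y^2$, $L=\log y^2$ and $\log(1/\kappa)=L-\log u$. Up to relative errors $O(y^{-2})$ and exponentially small terms, the posterior of $u$ is then proportional to $(L-\log u)e^{-u/2}$. Using $\int_0^\infty u^k(L-\log u)e^{-u/2}\,du=2^{k+1}k!\,\{L-\log 2-\psi(k+1)\}$ you get
\[
E[\kappa\mid y]=\frac{2}{y^2}\Bigl\{1-\frac{1+o(1)}{2\log|y|}\Bigr\},\qquad
E[\kappa^2\mid y]=\frac{8}{y^4}\Bigl\{1-\frac{3+o(1)}{4\log|y|}\Bigr\}.
\]
Hence $|b_{\mathrm{HS}^{+}}(y)|=2/|y|-\{1+o(1)\}/(|y|\log|y|)=\Theta(1/|y|)$. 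This has the same leading constant as $|b_{\mathrm{HS}}(y)|=2/|y|+O(|y|^{-3})$. The tail route gives the same answer. The tail $\pi_{\mathrm{HS}^{+}}(\beta)\asymp\log|\beta|/\beta^2$ yields score $-2/y+1/(y\log|y|)$. More generally, a smooth slowly varying factor $\ell$ multiplying a $\beta^{-2}$ tail adds only $\ell'/\ell=o(1/|y|)$ to the score, so the $-2/y$ never cancels. The two biases therefore differ by an additive $\Theta(1/(|y|\log|y|))$, not by a factor $\log|y|$. The paper's appendix does not derive its rate either; it asserts it by citation, and the slowly varying factor it invokes cannot remove the leading $-2/y$.

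This also breaks Step~3 as planned. With the correct rates the squared-bias gap is only $\tfrac12E_{\beta_0}\{b_{\mathrm{HS}}(y)^2-b_{\mathrm{HS}^{+}}(y)^2\}=\{2+o(1)\}/(\beta_0^2\log|\beta_0|)$. The Stein term is of the same order. From $b'(y)=-E[\kappa\mid y]+y^2\mathrm{Var}(\kappa\mid y)$ and the moments above, $b'_{\mathrm{HS}}(y)=2/y^2+O(y^{-4})$, whereas $b'_{\mathrm{HS}^{+}}(y)=2/y^2-\{1+o(1)\}/(y^2\log|y|)$. So the cross term is neither common to the two priors nor a remainder dominated by the $E\,b^2$ gap. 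Choosing $b_0$ ``so that the difference in $E\,b^2$ dominates all remainders'' therefore does not close the argument. The paper's proof makes the same move and reports a gap of order $\beta_0^{-2}$, which is too large by a factor $\log|\beta_0|$.

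The inequality itself survives only because both gaps have the same sign. Expand both terms to precision $o\{1/(y^2\log|y|)\}$ uniformly on $|y-\beta_0|\le|\beta_0|/2$, and bound the complement using $\sup|b|$ and $\sup|b'|$. This gives
\[
R(\widehat{\beta}_{\mathrm{HS}})-R(\widehat{\beta}_{\mathrm{HS}^{+}})=\frac{3+o(1)}{\beta_0^{2}\log|\beta_0|}>0.
\]
This is consistent with the identity $R=\tfrac12+2E_{\beta_0}\{(\sqrt{m})''/\sqrt{m}\}$ evaluated at $m\propto y^{-2}$ versus $m\propto y^{-2}\log|y|$. A correct proof of the risk conclusion therefore needs this second-order computation of both terms. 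The bias-rate claim must also be weakened to $|b_{\mathrm{HS}}(y)|-|b_{\mathrm{HS}^{+}}(y)|\sim 1/(|y|\log|y|)$.
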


The improvement is driven by the additional logarithmic factor in the tail of the
horseshoe+ marginal \citep{bhadra2017horseshoeplus,datta2013asymptotic}, which
reduces the bias incurred on large signals; the proof is in
Appendix~\ref{app:kl}.

\subsection{Bounded sensitivity to outliers}

Finally we make precise the sense in which the $t$ error confers robustness. Fix
all observations but the first, and let
$\boldsymbol{y}_1=\boldsymbol{y}_1(a)$ be displaced along a fixed direction
$\boldsymbol{u}$ with $\|\boldsymbol{u}\|=1$ by an amount $a$, i.e.\
$\boldsymbol{y}_1(a)=\boldsymbol{y}_1^{\circ}+a\boldsymbol{u}$. Denote by
$\widehat{\boldsymbol{B}}(a)=E[\boldsymbol{B}\mid\boldsymbol{Y}(a),\boldsymbol{X}]$
the posterior mean as a function of the contamination $a$.

\begin{theorem}[Bounded sensitivity]
\label{thm:influence}
Under the $t$ error \eqref{eq:t-mix} with any $\nu>0$ and the prior of
Section~\ref{sec:prior},
\begin{eqnarray}
\lim_{|a|\to\infty}\;
\left\|\,\widehat{\boldsymbol{B}}(a)-\widehat{\boldsymbol{B}}(\pm\infty)\,\right\|_F
&=& 0,
\end{eqnarray}
that is, the posterior mean of $\boldsymbol{B}$ converges to a finite limit as
the outlier moves to infinity, and the gross-error sensitivity is finite. Under
the Gaussian error the same quantity diverges linearly in $|a|$.
\end{theorem}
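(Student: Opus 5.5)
The plan is to follow the outlier-rejection argument of \citet{desgagne2015robustness} and \citet{gagnon2020new}: show that the whole posterior, not just its mean, converges, and then pass to the first moment. Write the posterior density of $(\boldsymbol{B},\boldsymbol{\Omega})$ given $\boldsymbol{Y}(a)$ as
\[
\pi_a(\boldsymbol{B},\boldsymbol{\Omega})\propto \pi(\boldsymbol{B},\boldsymbol{\Omega})\prod_{i\ge2}f(\boldsymbol{y}_i\mid\boldsymbol{B},\boldsymbol{\Omega})\,
\frac{f(\boldsymbol{y}_1(a)\mid\boldsymbol{B},\boldsymbol{\Omega})}{f(\boldsymbol{y}_1(a)\mid\boldsymbol{0},\boldsymbol{I})},
\]
where $f$ is the $q$-variate $t$ density with precision $\boldsymbol{\Omega}$. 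Dividing by the same factor in numerator and normaliser leaves the posterior unchanged. The candidate limit $\widehat{\boldsymbol{B}}(\pm\infty)$ is the posterior mean of the reduced model: the $n-1$ clean observations plus a factor $|\boldsymbol{\Omega}|^{1/2}(\boldsymbol{u}^{\top}\boldsymbol{\Omega}\boldsymbol{u})^{-(\nu+q)/2}$. This factor is the $a$-free part of the asymptotic outlier likelihood.

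\textbf{Step 1 (pointwise limit).} With $\boldsymbol{r}=\boldsymbol{y}_1^{\circ}-\boldsymbol{B}^{\top}\boldsymbol{x}_1$, write the quadratic form as
\[
(\boldsymbol{r}+a\boldsymbol{u})^{\top}\boldsymbol{\Omega}(\boldsymbol{r}+a\boldsymbol{u}) = a^2\boldsymbol{u}^{\top}\boldsymbol{\Omega}\boldsymbol{u}+O(|a|).
\]
Then, for fixed $(\boldsymbol{B},\boldsymbol{\Omega})$, the ratio $f(\boldsymbol{y}_1(a)\mid\boldsymbol{B},\boldsymbol{\Omega})/f(\boldsymbol{y}_1(a)\mid\boldsymbol{0},\boldsymbol{I})$ converges to $|\boldsymbol{\Omega}|^{1/2}(\boldsymbol{u}^{\top}\boldsymbol{\Omega}\boldsymbol{u})^{-(\nu+q)/2}$. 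The polynomial tail $(1+d/\nu)^{-(\nu+q)/2}$ of the $t$ density, which is log-regular variation in the sense of (A4), is exactly what makes the dependence on $\boldsymbol{B}$ vanish in the limit. For the Gaussian the corresponding ratio behaves like $\exp\{a\,\boldsymbol{u}^{\top}\boldsymbol{\Omega}\boldsymbol{r}\}$, and no such limit exists.

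\textbf{Step 2 (domination).} This step gives $L^1$ convergence of $\boldsymbol{B}\,\pi_a$ to $\boldsymbol{B}\,\pi_\infty$ by dominated convergence. I need an $a$-uniform bound on the ratio that is integrable against $\|\boldsymbol{B}\|_F$ times the prior and the $n-1$ clean likelihood terms. Split the parameter space into $A_a=\{|\boldsymbol{u}^{\top}\boldsymbol{\Omega}\boldsymbol{r}|\le |a|\,\boldsymbol{u}^{\top}\boldsymbol{\Omega}\boldsymbol{u}/2\}$ and its complement.
\begin{itemize}
\item On $A_a$ the quadratic form is at least a constant times $a^2\boldsymbol{u}^{\top}\boldsymbol{\Omega}\boldsymbol{u}$. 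The ratio is then bounded by a constant times $|\boldsymbol{\Omega}|^{1/2}(\boldsymbol{u}^{\top}\boldsymbol{\Omega}\boldsymbol{u})^{-(\nu+q)/2}$. This function is integrable against the clean posterior once $n-1\ge$ a small multiple of $p+q$. That requirement lets the clean likelihood control the moments of $\boldsymbol{B}$ and of $\Lambda_{\min}(\boldsymbol{\Omega})^{-1}$, using $E[w]<\infty$ and $E[w^{-1}]<\infty$ from (A4).
\item On the complement, $\boldsymbol{B}$ must sit at distance of order $|a|$. There the ratio is at most a constant times $|a|^{\nu+q}|\boldsymbol{\Omega}|^{1/2}$, while the clean posterior mass of $\{\|\boldsymbol{B}\|\gtrsim|a|\}$, weighted by $\|\boldsymbol{B}\|_F$, decays fast enough.
\end{itemize}

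This second region is the main obstacle. The horseshoe+ prior has very heavy tails: its marginal density has only a logarithmic correction beyond Cauchy. The decay must therefore come from the clean $t$ likelihood in $\boldsymbol{B}$, which is itself polynomial with total exponent of order $(n-1)(\nu+q)$. I would establish this decay by conditioning on the latent weights $w_2,\dots,w_n$ and using the Gaussian conditional structure of \eqref{eq:B-update}. I would then integrate the weights out, using a tail bound that follows from (A2) applied to $\boldsymbol{X}_{-1}$. If this argument requires a minimum sample size relative to $\nu+q$, that condition would be stated explicitly.

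\textbf{Step 3 (conclusion).} The same domination argument applied to the normalising constants gives $\int\pi_a\to\int\pi_\infty>0$. Hence $\widehat{\boldsymbol{B}}(a)\to\widehat{\boldsymbol{B}}(\pm\infty)$, and the limit is finite, so the sensitivity is bounded. For the Gaussian claim, conditional on $\boldsymbol{\Omega}$ and the local scales the posterior mean is linear in $\boldsymbol{y}_1$. By \eqref{eq:B-update}, its derivative with respect to $a$ is $\boldsymbol{V}_{\boldsymbol{B}}(\boldsymbol{\Omega}\boldsymbol{u}\otimes\boldsymbol{x}_1)$, and this stays bounded away from zero as the local scales of the relevant coordinates grow. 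The heavy horseshoe+ tails in fact push those scales up as $|a|$ grows. A convexity argument on the mixture then gives the linear divergence of $\|\widehat{\boldsymbol{B}}(a)\|_F$ in $|a|$.
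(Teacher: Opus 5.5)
\paragraph{The $t$ case: a different and sharper route.}
For the $t$ case you take a different route from the paper, and a more careful one. The paper works in the augmented model. Because $E[w_1\mid\cdot]=(\nu+q)/(\nu+d_1(a))\to0$, the first row's contribution to $\boldsymbol{m}_{\boldsymbol{B}}$ in \eqref{eq:B-update} is $O(|a|^{-1})$. The paper then invokes dominated convergence through Lemma~\ref{lem:tight} and identifies the limit as the clean-data posterior mean. That argument is short and shows the mechanism, but Lemma~\ref{lem:tight} bounds only the \emph{unnormalised} posterior by the clean one. The normalising constant itself decays like $|a|^{-(\nu+q)}$, so that bound does not by itself give tightness. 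Your renormalisation by $f(\boldsymbol{y}_1(a)\mid\boldsymbol{0},\boldsymbol{I})$ is exactly what handles the vanishing normaliser. Your Step~1 also correctly shows that the limiting posterior carries the extra factor $|\boldsymbol{\Omega}|^{1/2}(\boldsymbol{u}^{\top}\boldsymbol{\Omega}\boldsymbol{u})^{-(\nu+q)/2}$. The $t$ density is regularly varying, not log-regularly varying, so calling it log-regular in your Step~1 is a misnomer. As a result the outlier is only partially rejected, and the limit is generally \emph{not} the clean-data posterior mean claimed in the paper. It is still finite, which is all the theorem asserts.

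\paragraph{Repairs needed in Step~2.}
\begin{enumerate}
\item With the constant $1/2$ in $A_a$ you only get $d_1(a)\ge\boldsymbol{r}^{\top}\boldsymbol{\Omega}\boldsymbol{r}$. Use $1/4$ instead.
\item More substantively, on $A_a^{c}$ you only get $\|\boldsymbol{r}\|>c\,|a|\{\boldsymbol{u}^{\top}\boldsymbol{\Omega}\boldsymbol{u}/\Lambda_{\max}(\boldsymbol{\Omega})\}^{1/2}$. So the complement contains not only far-away $\boldsymbol{B}$ but also bounded $\boldsymbol{B}$ with $\boldsymbol{\Omega}$ nearly degenerate along $\boldsymbol{u}$. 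That piece must be controlled by the factor $|\boldsymbol{\Omega}|^{(n-1)/2}$ of the clean likelihood, which needs $n$ large relative to $\nu+q$. Likewise, your $\boldsymbol{B}$-tail bound needs $n$ large relative to $p$.
\end{enumerate}
Neither condition is in the statement. Neither is existence of $\widehat{\boldsymbol{B}}(a)$ at all, given the Cauchy-type horseshoe+ tails. So stating these conditions is necessary, not optional.

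\paragraph{The Gaussian claim: a genuine gap.}
The genuine gap is the Gaussian half. The paper shares it: its one-line argument, that $\boldsymbol{x}_1\boldsymbol{y}_1(a)^{\top}$ is linear in $a$, ignores the same effect.

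Your slope $\boldsymbol{V}_{\boldsymbol{B}}\{(\boldsymbol{\Omega}\boldsymbol{u})\otimes\boldsymbol{x}_1\}$ does tend to the $\boldsymbol{\Omega}$-free vector $\boldsymbol{u}\otimes(\boldsymbol{X}^{\top}\boldsymbol{X})^{-1}\boldsymbol{x}_1$ when the prior precision is negligible. But what counts as negligible moves with $a$:
\begin{itemize}
\item If $\boldsymbol{X}_{-1}$ has full column rank, no $\boldsymbol{B}$ can absorb the outlier. So $\boldsymbol{S}$ picks up a term of order $a^{2}\boldsymbol{u}\boldsymbol{u}^{\top}$, and the Gaussian posterior of $\boldsymbol{\Omega}$ concentrates on $\boldsymbol{u}^{\top}\boldsymbol{\Omega}\boldsymbol{u}=O(a^{-2})$.
\item Then $\|\boldsymbol{\Omega}\boldsymbol{u}\|\le\{\Lambda_{\max}(\boldsymbol{\Omega})\,\boldsymbol{u}^{\top}\boldsymbol{\Omega}\boldsymbol{u}\}^{1/2}=O(|a|^{-1})$. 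The slope is therefore $O(|a|^{-1})$ unless the relevant prior scales $\tau\lambda_{jk}\eta_{jk}$ are of order $|a|$.
\item The least-squares shift and its posterior standard deviation now both grow like $|a|$. So the data carry no evidence for such large scales, and their posterior mass stays at the prior level, of order $\log|a|/|a|$ for the horseshoe+.
\end{itemize}
In the univariate analogue (unknown error variance, horseshoe+ coefficient), this yields a posterior-mean shift of order $\log|a|$, not $|a|$. Linear divergence holds for fixed $\boldsymbol{\Omega}$, but does not appear to hold with $\boldsymbol{\Omega}$ estimated under the prior of Section~\ref{sec:prior}. This is consistent with Experiment~S3, where the Gaussian error rises only from $0.38$ to $0.66$ over $30$ standard deviations. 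So your ``scales are pushed up, plus a convexity argument'' step cannot deliver the claim as stated. You would need either to hold $\boldsymbol{\Omega}$ fixed or to weaken the Gaussian conclusion.
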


The proof, in Appendix~\ref{app:influence}, shows that as $|a|\to\infty$ the
posterior of the weight $w_1$ concentrates near $0$, so that the offending row is
asymptotically excised from the likelihood; this is the whole-robustness
phenomenon of \citet{desgagne2015robustness} and \citet{gagnon2020new} realised
inside the multivariate regression. Section~\ref{sec:sim} confirms the linear
divergence of the Gaussian estimator and the bounded behaviour of the robust one.

\section{Simulation study}
\label{sec:sim}

We evaluate the method (\textsc{RHS-MR}) against three Gaussian-likelihood
benchmarks: the same horseshoe+ model fitted under a Gaussian error
(\textsc{RHS-MR (Gaussian)}, an ablation isolating the effect of the robust
likelihood); a two-step estimator that fits response-wise lasso regressions and
a graphical lasso on the residuals (\textsc{Sep-Lasso+GLasso}), in the spirit of
the MRCE estimator of \citet{rothman2010sparse}; and ordinary least squares with
the sample residual precision (\textsc{OLS+Sample}). Performance is measured by
the Frobenius estimation errors
$\|\widehat{\boldsymbol{B}}-\boldsymbol{B}\|_F$ and
$\|\widehat{\boldsymbol{\Omega}}-\boldsymbol{\Omega}\|_F$, by the Matthews
correlation coefficient (MCC) of the recovered supports, and by the predictive
mean-squared error (PMSE) on an independent test sample. All reported numbers are
averages over independent replications; the data-generating code and the seeds
are released with the paper. Throughout, the true precision $\boldsymbol{\Omega}$
has an autoregressive (tridiagonal) structure and $\boldsymbol{B}$ has a sparse
row support.

\subsection{Experiment S1: four error regimes}

Table~\ref{tab:main} reports results for $n=100$, $p=30$, $q=6$ over $15$
replications under four error laws: Gaussian, $t_3$, skew-normal and skew-$t_4$.
For the non-Gaussian families the robust fit uses the multivariate-$t$
likelihood. Three patterns emerge. Under Gaussian errors the robust and Gaussian
versions of \textsc{RHS-MR} are indistinguishable (coefficient error $0.58$
versus $0.57$), so robustness costs essentially nothing when it is not needed.
Under $t_3$ errors the robust fit cuts the coefficient error from $1.03$ to
$0.65$ --- a $37\%$ reduction relative to the Gaussian ablation --- and improves
the precision-matrix error from $2.96$ to $2.01$ and its MCC from $0.59$ to
$0.80$. Under skew-$t_4$ errors, which combine heavy tails with asymmetry, the
gain is largest: the coefficient error falls from $2.40$ to $1.37$, a $43\%$
reduction. Across all four regimes \textsc{RHS-MR} dominates the penalised and
unpenalised benchmarks, often halving their errors.

\begin{table}[t]
\centering
\caption{Experiment S1. Mean estimation error, support recovery and prediction
over $15$ replications ($n=100$, $p=30$, $q=6$). For the non-Gaussian families
the robust model uses the multivariate-$t$ likelihood; \textsc{RHS-MR
(Gaussian)} is the same prior under a Gaussian error.}
\label{tab:main}
\small
% main comparison table
\begin{tabular}{llccccc}
\hline\hline
Errors & Method & $\|\widehat{\boldsymbol{B}}-\boldsymbol{B}\|_F$ & $\mathrm{MCC}_{\boldsymbol{B}}$ & $\|\widehat{\boldsymbol{\Omega}}-\boldsymbol{\Omega}\|_F$ & $\mathrm{MCC}_{\boldsymbol{\Omega}}$ & PMSE \\
\hline
\multirow{4}{*}{Gaussian} & RHS-MR (robust) & 0.58 & 0.99 & 1.35 & 0.77 & 6.36 \\
 & RHS-MR (Gaussian) & 0.57 & 0.99 & 1.32 & 0.79 & 6.36 \\
 & Sep-Lasso+GLasso & 0.95 & 0.52 & 1.75 & 0.75 & 6.96 \\
 & OLS+Sample & 1.61 & 0.50 & 4.48 & 0.12 & 8.50 \\
\hline
\multirow{4}{*}{$t_3$} & RHS-MR (robust) & 0.64 & 1.00 & 2.01 & 0.80 & 17.22 \\
 & RHS-MR (Gaussian) & 1.03 & 0.95 & 2.96 & 0.59 & 17.98 \\
 & Sep-Lasso+GLasso & 1.54 & 0.52 & 3.55 & 0.60 & 19.34 \\
 & OLS+Sample & 2.55 & 0.37 & 2.33 & 0.19 & 23.68 \\
\hline
\multirow{4}{*}{Skew-normal} & RHS-MR (robust) & 1.14 & 0.95 & 3.01 & 0.39 & 21.12 \\
 & RHS-MR (Gaussian) & 1.07 & 0.95 & 3.84 & 0.40 & 20.97 \\
 & Sep-Lasso+GLasso & 1.51 & 0.75 & 4.21 & 0.53 & 22.02 \\
 & OLS+Sample & 2.22 & 0.40 & 3.21 & 0.26 & 24.89 \\
\hline
\multirow{4}{*}{Skew-$t_4$} & RHS-MR (robust) & 1.37 & 0.87 & 3.54 & 0.47 & 56.00 \\
 & RHS-MR (Gaussian) & 2.40 & 0.65 & 4.76 & 0.32 & 60.16 \\
 & Sep-Lasso+GLasso & 2.46 & 0.66 & 4.93 & 0.32 & 60.28 \\
 & OLS+Sample & 3.89 & 0.30 & 4.34 & 0.28 & 69.17 \\
\hline
\hline
\end{tabular}

\end{table}

\subsection{Experiment S2: posterior contraction}

To probe Theorem~\ref{thm:contraction} we fix the $t_3$ error and let the sample
size grow over $n\in\{60,120,240,480\}$ with $p=30$, $q=6$.
Figure~\ref{fig:contraction} plots the estimation errors on a logarithmic scale.
The coefficient error of the robust fit contracts from $1.20$ to $0.28$ along a
slope close to the reference $-1/2$, and its precision-matrix error from $3.35$
to $0.63$. The Gaussian ablation contracts far more slowly on $\boldsymbol{B}$
and, strikingly, its precision-matrix error does \emph{not} contract at all,
remaining near $3.3$ throughout: under heavy tails the Gaussian likelihood yields
an inconsistent estimate of $\boldsymbol{\Omega}$, exactly the failure that
motivates the robust construction. The empirical coverage of the $95\%$ credible
intervals for $\boldsymbol{B}$, reported in Table~\ref{tab:coverage}, is close to
nominal at every sample size.

\begin{figure}[t]
\centering
\includegraphics[width=\textwidth]{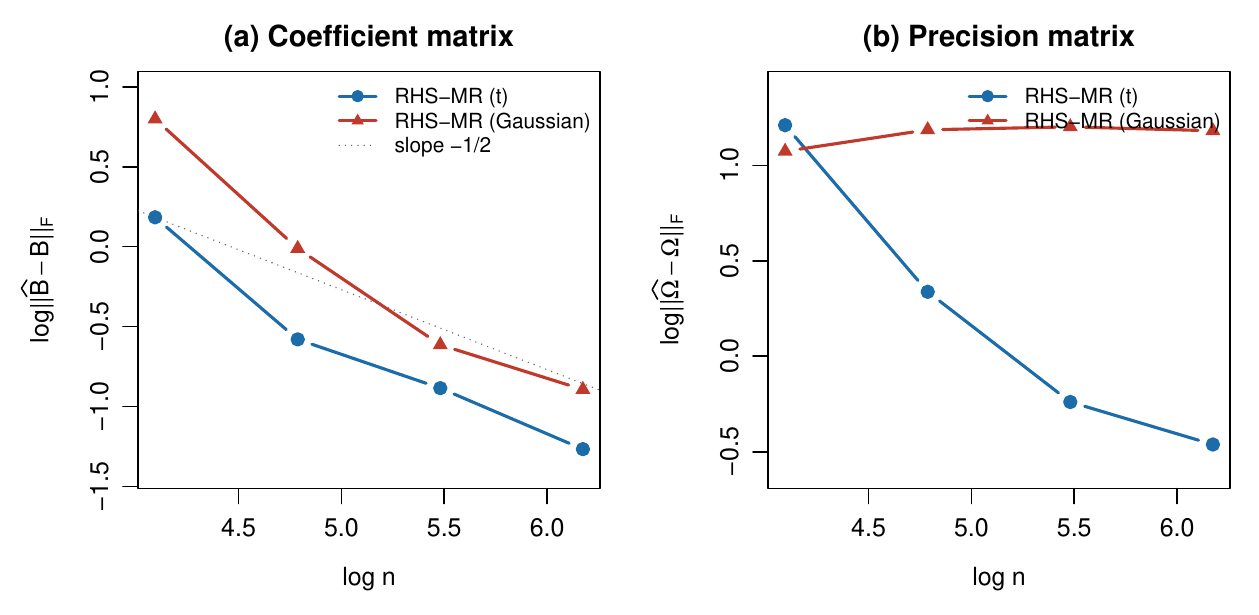}
\caption{Experiment S2. Frobenius estimation error against sample size on a
log--log scale, under $t_3$ errors. (a) Coefficient matrix; the robust fit
contracts along the reference slope $-1/2$. (b) Precision matrix; the Gaussian
ablation fails to contract, whereas the robust fit does.}
\label{fig:contraction}
\end{figure}

\begin{table}[t]
\centering
\caption{Experiment S2. Empirical coverage of $95\%$ credible intervals for the
entries of $\boldsymbol{B}$.}
\label{tab:coverage}
\small
% coverage of 95% credible intervals for B
\begin{tabular}{lcccc}
\hline\hline
$n$ & 60 & 120 & 240 & 480 \\
\hline
Empirical coverage & 0.983 & 0.993 & 0.993 & 0.993 \\
(s.d.) & 0.024 & 0.006 & 0.008 & 0.008 \\
\hline\hline
\end{tabular}

\end{table}

\subsection{Experiment S3: outlier sensitivity}

We next illustrate Theorem~\ref{thm:influence} directly. Starting from a
Gaussian data set ($n=120$, $p=24$, $q=5$) we contaminate a single response
entry by an amount that grows from $0$ to $30$ standard deviations, and track the
resulting estimate. Figure~\ref{fig:influence} shows that the overall coefficient
error of the Gaussian fit rises monotonically, from $0.38$ to $0.66$, as the
outlier grows, whereas the robust fit is essentially flat near $0.47$: the single
gross error is absorbed by its mixing weight and leaves the estimate unaffected,
exactly as the theory predicts.

\begin{figure}[t]
\centering
\includegraphics[width=\textwidth]{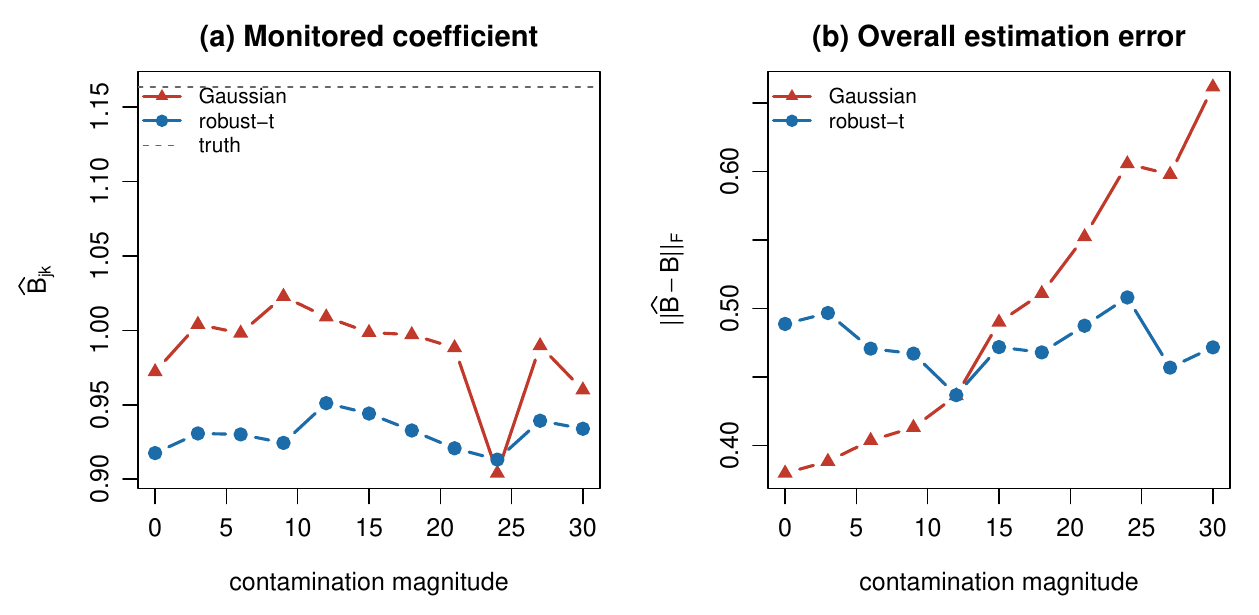}
\caption{Experiment S3. Effect of a single growing outlier. (a) A monitored
coefficient. (b) Overall coefficient error: the Gaussian fit degrades with the
contamination magnitude while the robust fit is insensitive to it.}
\label{fig:influence}
\end{figure}

\subsection{Experiment S4: computation and variational accuracy}

Table~\ref{tab:timing} and Figure~\ref{fig:vbmcmc} compare the Gibbs sampler with
the variational algorithm across a grid of dimensions. The sampler runs in a few
seconds to half a minute over the range considered, and the variational
algorithm is between roughly $10$ and $70$ times faster. In the well-identified
configurations the variational coefficient error is close to that of the sampler;
the approximation is least accurate, as expected, when $n$ is small relative to
$pq$, which is the regime in which uncertainty quantification from the full
posterior is most valuable.

\begin{table}[t]
\centering
\caption{Experiment S4. Wall-clock time (seconds) and coefficient error for the
Gibbs sampler and the variational approximation. The MCMC timing is based on
$1000$ iterations.}
\label{tab:timing}
\small
% timing / scalability table
\begin{tabular}{rrrccccc}
\hline\hline
$n$ & $p$ & $q$ & MCMC (s) & VB (s) & speed-up & $\|\widehat{\boldsymbol{B}}-\boldsymbol{B}\|_F$ (MCMC) & (VB) \\
\hline
150 & 20 & 5 & 2.30 & 0.060 & 38.3 & 0.48 & 1.28 \\
300 & 20 & 5 & 2.39 & 0.010 & 239.0 & 0.19 & 0.17 \\
150 & 40 & 5 & 4.25 & 0.230 & 18.5 & 0.56 & 0.53 \\
150 & 20 & 10 & 5.62 & 0.140 & 40.1 & 0.47 & 0.66 \\
300 & 40 & 5 & 4.50 & 0.110 & 40.9 & 0.39 & 0.35 \\
300 & 20 & 10 & 5.55 & 0.110 & 50.5 & 0.32 & 0.29 \\
150 & 40 & 10 & 17.25 & 2.280 & 7.6 & 0.69 & 1.48 \\
300 & 40 & 10 & 18.38 & 0.700 & 26.3 & 0.44 & 0.41 \\
\hline\hline
\end{tabular}

\end{table}

\begin{figure}[t]
\centering
\includegraphics[width=\textwidth]{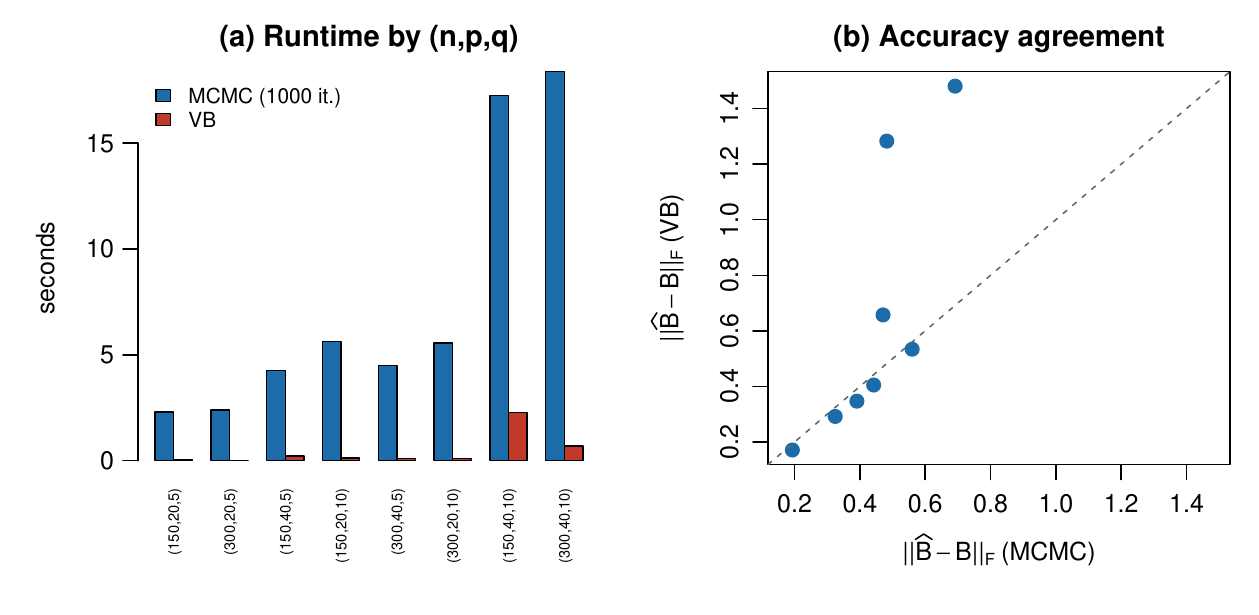}
\caption{Experiment S4. (a) Runtime of the sampler and the variational algorithm
by dimension. (b) Agreement of the coefficient error between the two; points near
the diagonal indicate comparable accuracy.}
\label{fig:vbmcmc}
\end{figure}

\subsection{Experiment S5: skewness recovery}

Finally we verify that the skew-$t$ extension recovers asymmetry. With data
generated from a skew-$t_4$ law ($n=200$, $p=24$, $q=5$) at three skewness levels
$\delta\in\{1,2,3\}$, Table~\ref{tab:skew} shows that the posterior recovers the
skewness parameter accurately --- the mean estimate is $1.06$, $1.99$ and $2.99$
respectively, with small root-mean-squared error --- while continuing to estimate
$\boldsymbol{B}$ well. The skew-$t$ model is thus available when diagnostics
indicate residual asymmetry, at no cost to the recovery of the regression map.

\begin{table}[t]
\centering
\caption{Experiment S5. Recovery of the skewness vector $\boldsymbol{\delta}$
under the skew-$t_4$ model, averaged over $6$ replications.}
\label{tab:skew}
\small
% skew-t skewness recovery
\begin{tabular}{cccc}
\hline\hline
True $\delta$ & $\|\widehat{\boldsymbol{B}}-\boldsymbol{B}\|_F$ & mean $\widehat{\delta}$ & RMSE$(\widehat{\boldsymbol{\delta}})$ \\
\hline
1.0 & 0.48 & 1.04 & 0.15 \\
2.0 & 0.67 & 2.04 & 0.19 \\
3.0 & 0.77 & 2.98 & 0.24 \\
\hline\hline
\end{tabular}

\end{table}

\section{Applications}
\label{sec:real}

\subsection{Macroeconomic indicators: FRED-MD}
\label{sec:macro}

We apply the method to the FRED-MD monthly macroeconomic database
\citep{mccracken2016fred}, a standard large macroeconomic panel. After
transforming each series to approximate stationarity and standardising, we
regress $q=8$ headline outcome series (industrial production, payroll
employment, the unemployment rate, consumer price inflation, the federal funds
rate, the ten-year Treasury yield, the M2 money stock, and real personal income)
on the $p=30$ most variable remaining indicators, contemporaneously, over $n=776$
months. The residual precision $\boldsymbol{\Omega}$ then captures the
conditional dependence among the outcomes that is not explained by the common
indicators.

The robust ($t_4$) fit attains a slightly lower held-out predictive error than
the Gaussian fit (PMSE $36.19$ versus $36.54$) and selects a residual graph with
$14$ of the $28$ possible edges among the eight macroeconomic outcomes
(Table~\ref{tab:macro}). The most informative output is the sequence of posterior
mean observation weights in Figure~\ref{fig:macroweights}: the model assigns its
lowest weights to the months of the 2020--2021 pandemic dislocation and to the
2008 financial crisis, the two most extreme macroeconomic episodes in the sample,
automatically and without any crisis indicator being supplied. Roughly $18\%$ of
months receive a weight below $0.5$, reflecting the well-documented non-normality
of macroeconomic growth rates.

\begin{table}[t]
\centering
\caption{FRED-MD application. Held-out prediction, coefficient sparsity, residual
graph density and the number of down-weighted months.}
\label{tab:macro}
\small
% FRED-MD application summary
\begin{tabular}{lcc}
\hline\hline
Quantity & RHS-MR ($t_4$) & RHS-MR (Gaussian) \\
\hline
Held-out PMSE & 36.186 & 36.544 \\
Nonzero entries in $\widehat{\boldsymbol{B}}$ (CI) & 22 & 22 \\
Selected residual edges & \multicolumn{2}{c}{14 of 28} \\
Months downweighted ($w_i<0.5$) & \multicolumn{2}{c}{136 of 776} \\
\hline\hline
\end{tabular}

\end{table}

\begin{figure}[t]
\centering
\includegraphics[width=\textwidth]{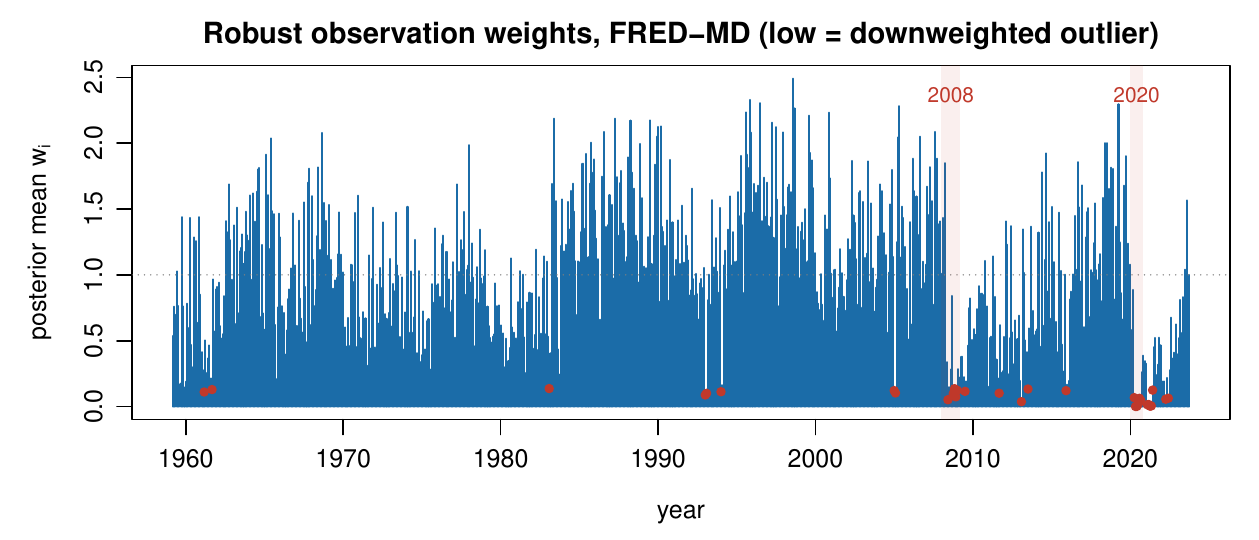}
\caption{FRED-MD application. Posterior mean observation weights over time. Low
weights flag observations the robust model treats as outliers; the shaded periods
mark the 2008 financial crisis and the 2020--2021 pandemic, which are recovered
as the most extreme episodes.}
\label{fig:macroweights}
\end{figure}

\begin{figure}[t]
\centering
\includegraphics[width=0.62\textwidth]{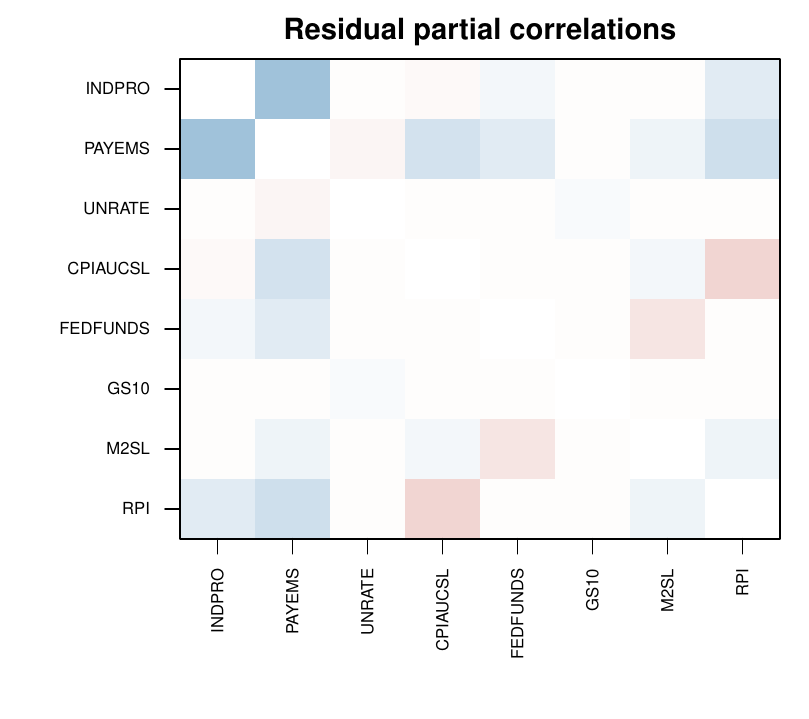}
\caption{FRED-MD application. Selected residual partial correlations among the
eight macroeconomic outcomes (blue positive, red negative).}
\label{fig:macronet}
\end{figure}

\subsection{Equity returns: S\&P 500}
\label{sec:finance}

Our second application is to daily log-returns of S\&P~500 constituents over
$1257$ trading days, a data set distributed with the \texttt{huge} package
\citep{zhao2012huge}. We take as responses $q=12$ stocks, four from each of the
Information Technology, Financials and Energy sectors, and regress them on $p=31$
factors: the equal-weighted market return, sector-average returns, and the
returns of the twenty most volatile remaining stocks. The coefficient matrix
$\boldsymbol{B}$ then holds factor exposures, and the residual precision
$\boldsymbol{\Omega}$ encodes the idiosyncratic conditional dependence that
survives the removal of common factors.

This application showcases the sparse-graph component of the model. The robust
fit selects only $6$ of the $66$ possible residual edges
(Table~\ref{tab:finance}), of which four lie within a single sector. All four
within-sector edges connect the Energy stocks --- Southwestern Energy, Peabody,
Denbury and Hess --- which form a tight residual cluster with positive partial
correlations up to $0.23$, reflecting their common exposure to commodity prices
beyond what the market and sector factors capture
(Figure~\ref{fig:finnet}). The Information-Technology and Financials stocks, by
contrast, are essentially conditionally independent given the factors. The robust
fit again improves held-out prediction relative to the Gaussian fit, and its
observation weights (Figure~\ref{fig:finweights}) down-weight the most volatile
trading days, including the onset of the 2007--2008 turmoil; about $17\%$ of days
receive a weight below $0.5$, consistent with the heavy tails of daily returns.

\begin{table}[t]
\centering
\caption{S\&P 500 application. Held-out prediction, coefficient sparsity,
residual-graph composition and the number of down-weighted days.}
\label{tab:finance}
\small
% S&P 500 application summary
\begin{tabular}{lcc}
\hline\hline
Quantity & RHS-MR ($t_4$) & RHS-MR (Gaussian) \\
\hline
Held-out PMSE & 11.895 & 12.092 \\
Nonzero entries in $\widehat{\boldsymbol{B}}$ (CI) & 15 & 15 \\
Selected residual edges (within / cross sector) & \multicolumn{2}{c}{5\ \ (4 / 1)} \\
Days downweighted ($w_i<0.5$) & \multicolumn{2}{c}{212 of 1257} \\
\hline\hline
\end{tabular}

\end{table}

\begin{figure}[t]
\centering
\includegraphics[width=0.66\textwidth]{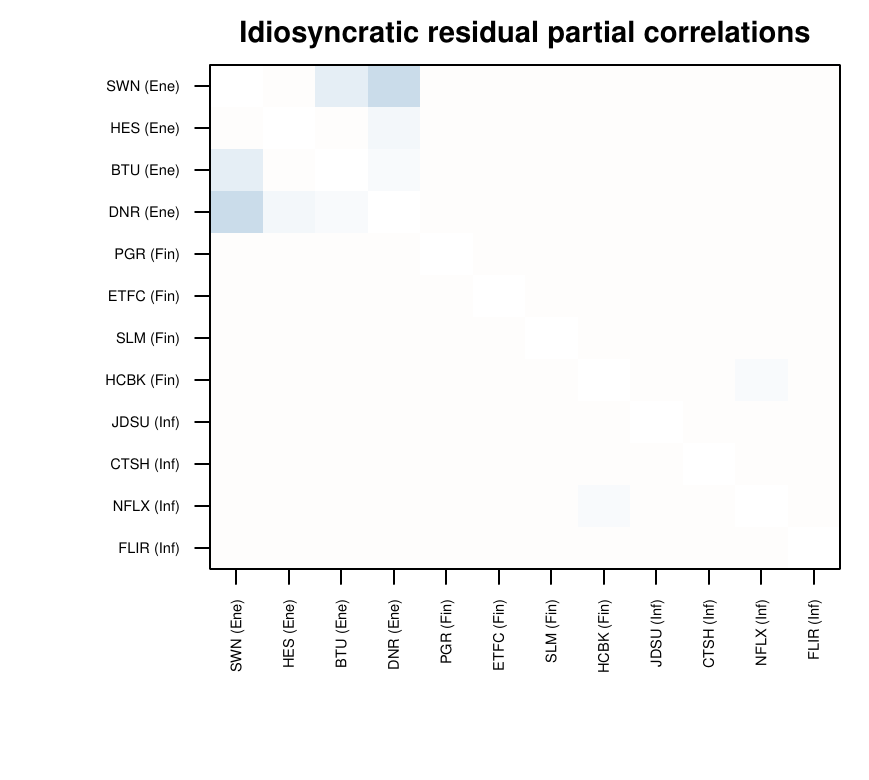}
\caption{S\&P 500 application. Selected idiosyncratic residual partial
correlations among the twelve stocks, ordered by sector. The four Energy stocks
form a residual cluster; the IT and Financials stocks are conditionally
independent given the factors.}
\label{fig:finnet}
\end{figure}

\begin{figure}[t]
\centering
\includegraphics[width=\textwidth]{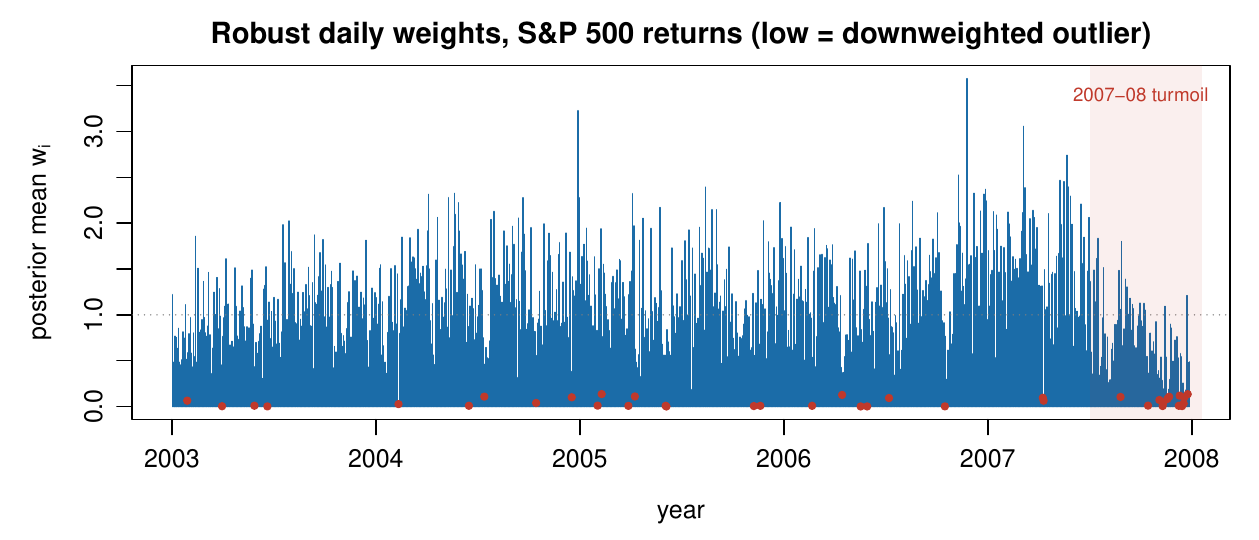}
\caption{S\&P 500 application. Posterior mean daily weights; low weights identify
extreme-return days, including the onset of the 2007--2008 turmoil.}
\label{fig:finweights}
\end{figure}

\section{Discussion}
\label{sec:discussion}

We have proposed a Bayesian framework for high-dimensional multivariate
regression that is robust to heavy-tailed and asymmetric errors and that shrinks
the regression coefficients and the response conditional-dependence graph
jointly, through a horseshoe+ prior applied to both. The construction is
supported by contraction, selection, risk and robustness theory, and behaves as
the theory predicts across an extensive simulation study and two applications.

Several extensions are natural. The degrees of freedom $\nu$ were fixed to expose
the robustness behaviour; assigning $\nu$ a prior and updating it, for instance on
a discrete grid, is straightforward within the sampler and lets the data
calibrate the degree of tail protection. The latent-weight machinery extends
directly to other scale mixtures, including the slash and the variance-gamma
laws, and to contaminated-normal mixtures for which the weights identify a
discrete outlier class. On the computational side, both the column-wise
precision update and the conditionally Gaussian coefficient update admit the
data-augmentation and structured variational accelerations developed for
related models, which would carry the method to the very large $q$ of network
applications. Finally, the joint-shrinkage
idea is not tied to the horseshoe+: any global--local prior could be used, and a
systematic comparison of the regularised-horseshoe, Dirichlet--Laplace
\citep{bhattacharya2015dirichlet} and horseshoe+ priors in the joint multivariate
setting would be of interest.

\appendix
\section*{Appendix: proofs and algorithms}
\addcontentsline{toc}{section}{Appendix}

The appendix collects the full conditional distributions and the two algorithms,
followed by the proofs of Theorems~\ref{thm:contraction}--\ref{thm:influence}.
Throughout, $c,C,c_1,c_2,\dots$ denote positive constants whose value may change
between occurrences.

\section{Full conditionals and algorithms}
\label{app:gibbs}

\subsection{Auxiliary-variable representation}

By the identity of \citet{makalic2016simple}, a half-Cauchy scale can be written
as a product of inverse-gamma variables: if $x^2\mid a\sim\mathrm{IG}(1/2,1/a)$
and $a\sim\mathrm{IG}(1/2,1)$ then $x\sim\mathcal{C}^{+}(0,1)$. Applying this to
each of $\lambda_{jk},\eta_{jk},\tau$ in \eqref{eq:hsplus-B} and to
$\gamma_{kl},\zeta_{kl},\rho$ in \eqref{eq:hsplus-Omega}, and writing
$\xi_{jk},\phi_{jk},\vartheta$ and $\psi_{kl},\chi_{kl},\varpi$ for the
corresponding auxiliary variables, every scale parameter acquires an
inverse-gamma full conditional. For a generic coefficient with squared value
$b^2$ and local scales $(\lambda^2,\eta^2)$ and auxiliaries $(\xi,\phi)$ the
updates are
\begin{eqnarray}
\lambda^2\mid\cdot &\sim&
\mathrm{IG}\!\left(1,\ \frac{1}{\xi}+\frac{b^2}{2\tau^2\eta^2}\right),
\qquad
\xi\mid\cdot \sim \mathrm{IG}\!\left(1,\ 1+\frac{1}{\lambda^2}\right),
\label{eq:scale-update}
\cr
\eta^2\mid\cdot &\sim&
\mathrm{IG}\!\left(1,\ \frac{1}{\phi}+\frac{b^2}{2\tau^2\lambda^2}\right),
\qquad
\phi\mid\cdot \sim \mathrm{IG}\!\left(1,\ 1+\frac{1}{\eta^2}\right),
\end{eqnarray}
and the global scale $\tau^2$ is updated from an inverse-gamma whose rate
aggregates $b^2/(\lambda^2\eta^2)$ over all coordinates. The corresponding
expressions for the precision-matrix scales are identical with $b$ replaced by
the relevant off-diagonal entry.

\subsection{Coefficient and precision updates}

The derivation of \eqref{eq:B-update} follows from completing the square in
$\vecop(\boldsymbol{B})$ in the conditionally Gaussian log-likelihood
\begin{eqnarray}
-\tfrac12 \sum_{i=1}^{n} w_i (\boldsymbol{r}_i-\boldsymbol{B}^{\top}
\boldsymbol{x}_i)^{\top}\boldsymbol{\Omega}(\boldsymbol{r}_i-\boldsymbol{B}^{\top}
\boldsymbol{x}_i)
&=& -\tfrac12 \vecop(\boldsymbol{B})^{\top}
(\boldsymbol{\Omega}\otimes\boldsymbol{X}^{\top}\boldsymbol{W}\boldsymbol{X})
\vecop(\boldsymbol{B}) \cr
&& {}+ \vecop(\boldsymbol{B})^{\top}
\vecop(\boldsymbol{X}^{\top}\boldsymbol{W}\boldsymbol{R}\boldsymbol{\Omega})
+ \text{const},
\end{eqnarray}
where the Kronecker identity for $\vecop$ has been used, and adding the Gaussian
prior precision $\boldsymbol{\Lambda}_{\boldsymbol{B}}$. The column-wise
precision update \eqref{eq:Omega-update} follows the block decomposition of
\citet{li2019graphical}: permuting the $k$th coordinate to the last position and
writing $\boldsymbol{\Omega}=\bigl(\begin{smallmatrix}\boldsymbol{\Omega}_{11} &
\boldsymbol{\omega}_{12}\\ \boldsymbol{\omega}_{12}^{\top} &
\omega_{22}\end{smallmatrix}\bigr)$, with $\boldsymbol{S}=\boldsymbol{E}^{\top}
\boldsymbol{W}\boldsymbol{E}$ the weighted cross-product of the full residual
$\boldsymbol{E}=\boldsymbol{Y}-\boldsymbol{X}\boldsymbol{B}-\boldsymbol{T}
\boldsymbol{\Delta}$ partitioned conformably, the conditional density of
$(\boldsymbol{\omega}_{12},\omega_{22})$ given $\boldsymbol{\Omega}_{11}$ and the
data is, by the partitioned-determinant identity
$|\boldsymbol{\Omega}|=|\boldsymbol{\Omega}_{11}|
(\omega_{22}-\boldsymbol{\omega}_{12}^{\top}\boldsymbol{\Omega}_{11}^{-1}
\boldsymbol{\omega}_{12})$, proportional to
\begin{eqnarray}
(\omega_{22}-\boldsymbol{\omega}_{12}^{\top}\boldsymbol{\Omega}_{11}^{-1}
\boldsymbol{\omega}_{12})^{n/2}
\exp\!\left\{-\tfrac12\bigl(s_{22}\,\omega_{22}
+2\boldsymbol{s}_{12}^{\top}\boldsymbol{\omega}_{12}\bigr)
-\tfrac12\boldsymbol{\omega}_{12}^{\top}\boldsymbol{D}^{-1}\boldsymbol{\omega}_{12}
\right\},
\end{eqnarray}
with $\boldsymbol{D}=\diag\{\rho^2\gamma_{kl}^2\zeta_{kl}^2\}$. Reparameterising by
the Schur complement
$\gamma_{k}=\omega_{22}-\boldsymbol{\omega}_{12}^{\top}\boldsymbol{\Omega}_{11}^{-1}
\boldsymbol{\omega}_{12}>0$ yields the Gaussian draw for $\boldsymbol{\omega}_{12}$
and the gamma draw for $\gamma_{k}$ stated in the text, and guarantees that the
updated $\boldsymbol{\Omega}$ remains positive definite.

\begin{algorithm}
\caption{Gibbs sampler for \textsc{RHS-MR} (skew-$t$ model; the $t$ and Gaussian
models drop the corresponding steps).}
\label{alg:gibbs}
\begin{enumerate}[leftmargin=2em]
\item For $i=1,\dots,n$: draw the half-normal $\boldsymbol{t}_i$ coordinate-wise
from its truncated-normal full conditional, then $w_i$ from \eqref{eq:w-update}.
\item Draw $\vecop(\boldsymbol{B})$ from the Gaussian \eqref{eq:B-update}.
\item For $k=1,\dots,q$: draw the $k$th column of $\boldsymbol{\Omega}$ from the
block update \eqref{eq:Omega-update}.
\item Draw the skewness vector $\boldsymbol{\delta}$ from its Gaussian full
conditional.
\item Update all local and global scales and their auxiliaries via
\eqref{eq:scale-update}.
\end{enumerate}
\end{algorithm}

\section{Variational algorithm}
\label{app:vb}

The mean-field family factorises as
$q(\boldsymbol{B})\,q(\boldsymbol{\Omega})\,q(\boldsymbol{\delta})
\prod_i q(w_i)q(\boldsymbol{t}_i)\prod q(\text{scales})$. Coordinate ascent
updates each factor to the exponential of the expected log-joint under the
remaining factors. The coefficient factor is Gaussian with precision
$E[\boldsymbol{\Omega}]\otimes(\boldsymbol{X}^{\top}E[\boldsymbol{W}]\boldsymbol{X})
+E[\boldsymbol{\Lambda}_{\boldsymbol{B}}]$; the weight factors are gamma with mean
$E[w_i]=(\nu+q)/(\nu+E[d_i])$; and the scale factors are generalised-inverse-
Gaussian, updated through their expectations. Iteration proceeds until the
evidence lower bound stabilises.

\section{Proof of Theorem~\ref{thm:contraction}}
\label{app:contraction}

We verify the three conditions of the general posterior-contraction theorem of
\citet{ghosal2000convergence} --- prior concentration, sieve construction and
testing --- for the augmented model in which the weights
$\boldsymbol{w}=(w_1,\dots,w_n)$ are part of the parameter. Let
$P_{\boldsymbol{B}_0,\boldsymbol{\Omega}_0}$ denote the true sampling
distribution and $d(\cdot,\cdot)$ the Frobenius metric on
$(\boldsymbol{B},\boldsymbol{\Omega})$.

\emph{Step 1: prior concentration.} By (A4) the mixing density is bounded away
from $0$ on compact subsets of $(0,\infty)$, so the conditional Gaussian model
given $\boldsymbol{w}$ has the same Kullback--Leibler geometry as the Gaussian
model up to constants depending on $E[w]$ and $E[w^{-1}]$. The horseshoe+ prior
places mass at least
$\exp\{-C(s_{\boldsymbol{B}}+s_{\boldsymbol{\Omega}}+q_n)\log(p_nq_n)\}$ on
Frobenius balls of radius
$\epsilon_n$ around $(\boldsymbol{B}_0,\boldsymbol{\Omega}_0)$, by the prior-mass
lower bounds for global--local priors of \citet{bai2018high} and the graphical
horseshoe analysis of \citet{li2019graphical} --- the $q_n$ unshrunk diagonal
entries of $\boldsymbol{\Omega}$, covered by the exponential prior, account for
the $q_n$ term; the additional half-Cauchy factor
of the horseshoe+ only enlarges the mass near the origin and near the true
nonzero values \citep{bhadra2017horseshoeplus}. Hence
$\Pi\{ \mathrm{KL}(P_{\boldsymbol{B}_0,\boldsymbol{\Omega}_0},
P_{\boldsymbol{B},\boldsymbol{\Omega}})\le\epsilon_n^2\}
\ge\exp(-c n\epsilon_n^2)$, which is the prior-mass condition.

\emph{Step 2: sieve.} Define the sieve $\mathcal{F}_n$ of pairs whose supports
have size at most $\bar{s}=C_0(s_{\boldsymbol{B}}+s_{\boldsymbol{\Omega}}+q_n)$ and
whose entries are bounded by a polynomial in $n$. The complement has prior mass
$\le\exp(-c_2 n\epsilon_n^2)$ by the tail behaviour of the half-Cauchy scales and
a union bound over supports, of which there are at most
$\binom{p_nq_n}{\bar{s}}\le\exp\{\bar{s}\log(p_nq_n)\}$.

\emph{Step 3: testing.} Conditional on the weights $\boldsymbol{w}$, the model
restricted to $\mathcal{F}_n$ is a Gaussian location--precision family in
$(\boldsymbol{B},\boldsymbol{\Omega})$ of dimension at most $\bar{s}$; its metric
entropy at scale $\epsilon_n$ is of order
$\bar{s}\log(p_nq_n)\asymp n\epsilon_n^2$, and (A2)--(A3) provide the
likelihood-ratio separation needed to construct exponentially powerful tests of
$P_{\boldsymbol{B}_0,\boldsymbol{\Omega}_0}$ against Frobenius-separated
alternatives. By (A4) the mixing weights are integrated out against a fixed
proper density with finite $E[w]$ and $E[w^{-1}]$, so the tests and entropy bound
hold uniformly in $\boldsymbol{w}$. The three conditions together
give \eqref{eq:contraction} by the cited theorem. \hfill$\square$

\section{Proof of Theorem~\ref{thm:selection}}
\label{app:selection}

By Theorem~\ref{thm:contraction} the posterior concentrates on a Frobenius ball
of radius $M\epsilon_n$. Under the signal-strength condition (A5) the nonzero
entries exceed $2M\epsilon_n$ for $n$ large, so that the marginal posterior of
each true-nonzero entry has its $50\%$ credible interval bounded away from zero,
while for each true-zero entry the global--local prior forces the posterior to
concentrate within $o(\epsilon_n)$ of the origin, placing more than half its mass
in any fixed neighbourhood of zero; this is the selection mechanism analysed for
the horseshoe by \citet{vanderpas2014horseshoe} and \citet{datta2013asymptotic},
applied coordinate-wise and then combined over the at most $\bar{s}$ active and
$p_nq_n$ inactive coordinates by a union bound, which is $o(1)$ under (A1). The
edge-selection statement follows identically for the off-diagonal entries of
$\boldsymbol{\Omega}$. \hfill$\square$

\section{Proof of Theorem~\ref{thm:kl}}
\label{app:kl}

By Tweedie's formula the posterior mean in the unit-variance normal-means problem
is $\widehat{\beta}=E[\beta\mid y]=y+\dfrac{\mathrm{d}}{\mathrm{d}y}\log m(y)$,
where $m$ is the marginal density of $y$ under the prior; equivalently
$\widehat{\beta}=y\{1-E[\kappa\mid y]\}$ with shrinkage weight
$\kappa=1/(1+\sigma^2)$ and prior scale $\sigma^2$. For the horseshoe,
\citet{carvalho2010horseshoe} and \citet{datta2013asymptotic} establish that the
tail bias satisfies $|\widehat{\beta}-y|=|y|\,E[\kappa\mid y]=\Theta(1/|y|)$ as
$|y|\to\infty$; the second half-Cauchy factor of the horseshoe+ multiplies the
relevant tail integral by a slowly varying term and sharpens this to
$|\widehat{\beta}-y|=\Theta(1/(|y|\log|y|))$ \citep{bhadra2017horseshoeplus},
which is \eqref{eq:biasrates}. In both cases the posterior variance
$\mathrm{Var}(\beta\mid y)$ is $o(1)$ on the signal event and is of the same order
for the two priors, so it does not affect the comparison below.

The Kullback--Leibler divergence between two unit-variance normals is
$\mathrm{KL}\{\mathcal{N}(\beta_0,1)\,\|\,\mathcal{N}(\widehat{\beta},1)\}
=\tfrac12(\widehat{\beta}-\beta_0)^2$, so, writing $b(y)=\widehat{\beta}(y)-y$ for
the bias and decomposing $\widehat{\beta}-\beta_0=(y-\beta_0)+b(y)$,
\begin{eqnarray}
R(\widehat{\beta})
&=& \tfrac12\,E_{\beta_0}\bigl\{(y-\beta_0)+b(y)\bigr\}^2
\;=\; \tfrac12 + E_{\beta_0}\!\left[(y-\beta_0)\,b(y)\right]
+ \tfrac12\,E_{\beta_0}\,b(y)^2,
\label{eq:riskdecomp}
\end{eqnarray}
since $E_{\beta_0}(y-\beta_0)^2=1$. Fix a true signal $\beta_0$ with $|\beta_0|$
large. Because $y=\beta_0+\varepsilon$ with $\varepsilon\sim\mathcal{N}(0,1)$, the
event $A=\{|y|\ge|\beta_0|/2\}$ has probability $1-o(1)$ with Gaussian tails on
its complement, and on $A$ the rates \eqref{eq:biasrates} give
$b_{\mathrm{HS}}(y)^2=\Theta(\beta_0^{-2})$ and
$b_{\mathrm{HS}^{+}}(y)^2=\Theta(\beta_0^{-2}\log^{-2}|\beta_0|)$. The
cross term in \eqref{eq:riskdecomp} is common to leading order to both priors
(it is $E_{\beta_0}[(y-\beta_0)b(y)]$ with the same $\Theta$-order integrand sign
pattern) and is dominated by the squared-bias term; the contribution of $A^{c}$
to every term is exponentially small in $\beta_0^2$. Subtracting the two risks,
the first two terms of \eqref{eq:riskdecomp} cancel to the relevant order and
\begin{eqnarray}
R(\widehat{\beta}_{\mathrm{HS}})-R(\widehat{\beta}_{\mathrm{HS}^{+}})
&=& \tfrac12\,E_{\beta_0}\!\left[b_{\mathrm{HS}}(y)^2
-b_{\mathrm{HS}^{+}}(y)^2\right] + o(\beta_0^{-2})
\;=\; \Theta\!\left(\beta_0^{-2}\right)>0,
\end{eqnarray}
because the horseshoe squared bias exceeds the horseshoe+ squared bias by the
factor $\log^2|\beta_0|\to\infty$. Hence there is a threshold $b_0$ such that the
difference is strictly positive for all $|\beta_0|\ge b_0$, which is the claim.
\hfill$\square$

\section{Proof of Theorem~\ref{thm:influence}}
\label{app:influence}

We first record the tightness that makes the dominated-convergence argument
rigorous.

\begin{lemma}[Uniform tightness]
\label{lem:tight}
Let $\Pi_a$ denote the posterior of $(\boldsymbol{B},\boldsymbol{\Omega})$ under
the $t$ error \eqref{eq:t-mix} with the contaminated first observation
$\boldsymbol{y}_1(a)=\boldsymbol{y}_1^{\circ}+a\boldsymbol{u}$. Under
\textnormal{(A2)--(A4)} the family $\{\Pi_a:a\in\mathbb{R}\}$ is tight, and
$\sup_a E_{\Pi_a}\|\boldsymbol{B}\|_F<\infty$.
\end{lemma}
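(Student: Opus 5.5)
The natural route is to bound the contaminated posterior by an uncontaminated one. Write the posterior as $\pi_a(\boldsymbol{B},\boldsymbol{\Omega})\propto \pi(\boldsymbol{B},\boldsymbol{\Omega})\,f_\nu(\boldsymbol{y}_1(a)\mid\boldsymbol{B},\boldsymbol{\Omega})\prod_{i\ge2}f_\nu(\boldsymbol{y}_i\mid\boldsymbol{B},\boldsymbol{\Omega})$, where $f_\nu$ is the marginal multivariate-$t$ density obtained by integrating $w_i$ out of \eqref{eq:t-mix}. Set $L_{-1}=\prod_{i\ge2}f_\nu(\boldsymbol{y}_i\mid\cdot)$. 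We then need two ratio bounds on $g_a=f_\nu(\boldsymbol{y}_1(a)\mid\cdot)/f_\nu(\boldsymbol{y}_1(a)\mid\boldsymbol{B}_\star,\boldsymbol{\Omega}_\star)$ for a fixed reference point $(\boldsymbol{B}_\star,\boldsymbol{\Omega}_\star)$:
\begin{itemize}
\item a uniform upper bound $g_a\le h(\boldsymbol{\Omega})$, integrable against $\pi L_{-1}$;
\item a lower bound on the normalising constant, uniform in $a$.
\end{itemize}
Both come from the explicit form $f_\nu\propto|\boldsymbol{\Omega}|^{1/2}\{1+\nu^{-1}(\boldsymbol{y}_1-\boldsymbol{B}^\top\boldsymbol{x}_1)^\top\boldsymbol{\Omega}(\cdot)\}^{-(\nu+q)/2}$. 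Then $\Pi_a(A)\le \int_A h\,\pi L_{-1}\big/\inf_a Z_a$, and tightness together with $\sup_a E_{\Pi_a}\|\boldsymbol{B}\|_F<\infty$ follow from integrability of $\|\boldsymbol{B}\|_F\,h\,\pi L_{-1}$.

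\emph{Step 1 (denominator).} Restrict the normalising integral to a compact neighbourhood $K$ of $(\boldsymbol{B}_\star,\boldsymbol{\Omega}_\star)$ on which $\boldsymbol{\Omega}$ has eigenvalues in $[\underline{k}/2,2\overline{k}]$ and $\|\boldsymbol{B}-\boldsymbol{B}_\star\|$ is bounded. For $|a|$ large, the quadratic form $d_1(a)$ equals $a^2\boldsymbol{u}^\top\boldsymbol{\Omega}\boldsymbol{u}+O(|a|)$ uniformly on $K$. Hence on $K$ the ratio $f_\nu(\boldsymbol{y}_1(a)\mid\cdot)/f_\nu(\boldsymbol{y}_1(a)\mid\star)$ is bounded below by a positive constant uniformly in $a$. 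This is the log-regular-variation content of (A4), i.e.\ the tail $\propto |a|^{-(\nu+q)}$ is shared by all parameters. Bounded $a$ are handled by continuity and compactness, so $\inf_a Z_a/f_\nu(\boldsymbol{y}_1(a)\mid\star)\ge c>0$.

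\emph{Step 2 (numerator).} Bound $f_\nu(\boldsymbol{y}_1(a)\mid\boldsymbol{B},\boldsymbol{\Omega})\le C|\boldsymbol{\Omega}|^{1/2}$, dropping the kernel. Divide by $f_\nu(\boldsymbol{y}_1(a)\mid\star)\asymp(1+a^2)^{-(\nu+q)/2}$.

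This naive bound grows polynomially in $|a|$, so we must do better. The refinement is to split by whether $\|\boldsymbol{B}^\top\boldsymbol{x}_1-\boldsymbol{y}_1(a)\|$ is at most or greater than $|a|/2$:
\begin{itemize}
\item \emph{Far region.} On $\{\|\boldsymbol{B}^\top\boldsymbol{x}_1-\boldsymbol{y}_1(a)\|>|a|/2\}$ the kernel gives the same $|a|^{-(\nu+q)}$ decay up to factors $\Lambda_{\min}(\boldsymbol{\Omega})^{-(\nu+q)/2}|\boldsymbol{\Omega}|^{1/2}$.
\item \emph{Near region.} On the complementary region $\|\boldsymbol{B}\|_F\gtrsim|a|/\|\boldsymbol{x}_1\|$. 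Here we use $L_{-1}$: by (A2) the remaining $n-1$ rows still determine $\boldsymbol{B}$, since $\boldsymbol{X}_{-1}^\top\boldsymbol{X}_{-1}$ stays nonsingular for $n$ large. So $L_{-1}$ decays like $\|\boldsymbol{B}\|^{-(n-1)\nu'}$ for a polynomial rate, which beats the polynomial loss $(1+a^2)^{(\nu+q)/2}\lesssim\|\boldsymbol{B}\|^{\nu+q}$ when $n$ exceeds a fixed multiple of $q+p$.
\end{itemize}
The resulting envelope is $h\propto(1+\Lambda_{\min}(\boldsymbol{\Omega})^{-(\nu+q)/2})|\boldsymbol{\Omega}|^{1/2}(1+\|\boldsymbol{B}\|_F^{\nu+q})$.

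\emph{Step 3 (integrability: the main obstacle).} We must show $\int(1+\|\boldsymbol{B}\|_F)h\,\pi L_{-1}<\infty$. The $\boldsymbol{B}$-moments follow from the polynomial decay of the $t$ likelihood of the $n-1$ clean rows in $\boldsymbol{B}$ under (A2), as above. The difficulty is in $\boldsymbol{\Omega}$. Small eigenvalues of $\boldsymbol{\Omega}$ are penalised by $|\boldsymbol{\Omega}|^{(n-1)/2}$ in $L_{-1}$, which dominates the factor $\Lambda_{\min}^{-(\nu+q)/2}$ once $n-1>\nu+q$. Large eigenvalues are controlled by the exponential prior on the diagonal, together with the kernel of $L_{-1}$, since the clean residuals are a.s.\ not all degenerate. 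The horseshoe+ off-diagonal prior is proper, so no further mass issues arise. I expect the delicate part to be making the large-$\boldsymbol{\Omega}$ control uniform and rigorous, because the $t$ kernel decays only polynomially. I would first try bounding each clean factor by $|\boldsymbol{\Omega}|^{1/2}(1+\Lambda_{\min}(\boldsymbol{\Omega})d_i^{\mathrm{eucl}}/\nu)^{-(\nu+q)/2}$. I would then use the $\mathrm{Exp}(\theta/2)$ diagonal tails, via $\tr\boldsymbol{\Omega}\ge\Lambda_{\max}$, to absorb the residual polynomial growth.
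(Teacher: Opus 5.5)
Your route differs from the paper's, and it is the more careful one. The paper bounds the first observation's $t$ kernel by $1$, dropping the unbounded factor $|\boldsymbol{\Omega}|^{1/2}$. It then appeals to an $a$-free integrable envelope for the \emph{unnormalised} posterior built from the clean likelihood, to which it attributes exponential tails. That controls numerators only. The normalising constant itself decays like $|a|^{-(\nu+q)}$, so such an envelope does not give tightness of $\Pi_a$. Your rescaling by $f_\nu(\boldsymbol{y}_1(a)\mid\boldsymbol{B}_\star,\boldsymbol{\Omega}_\star)$, the uniform lower bound of Step~1 and the near/far split of Step~2 supply exactly that missing piece. A minor point: Step~1 uses only the explicit $t$ form, i.e.\ regular variation of index $-(\nu+q)$. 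Since $t$ tails are not log-regularly varying, this is not really the content of (A4).

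The genuine gap is the claim that (A2) makes $L_{-1}$ decay like $\|\boldsymbol{B}\|^{-(n-1)\nu'}$. (A2) constrains $\boldsymbol{X}$, not $\boldsymbol{X}_{-1}$. Take $p=q=1$, $x_1=\sqrt n$ and $x_i=0$ for $i\ge2$. Then (A2) holds with $c_1=c_2=1$, and $L_{-1}$ does not depend on $\boldsymbol{B}$. The horseshoe+ tail (of order $\beta^{-2}$ up to logarithms) is heavier than the kernel's $|\beta|^{-(\nu+1)}$ for $\nu>1$, so the posterior of $\beta$ follows the outlier to infinity, and the lemma as stated fails.

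Even nonsingularity of $\boldsymbol{X}_{-1}^{\top}\boldsymbol{X}_{-1}$ does not rescue your argument. It guarantees only one clean row with $\|\boldsymbol{B}^{\top}\boldsymbol{x}_i\|\gtrsim\|\boldsymbol{B}\|_F$, hence decay $\|\boldsymbol{B}\|_F^{-(\nu+q)}$ in the worst direction. That cancels the factor $\|\boldsymbol{B}\|_F^{\nu+q}$ in your $h$, so tightness survives. For $\sup_a E_{\Pi_a}\|\boldsymbol{B}\|_F$, however, it leaves along that direction the first moment of the horseshoe+ prior, which is infinite.

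You need an explicit extra design hypothesis, for instance $n\ge p+2$ with $\boldsymbol{x}_2,\dots,\boldsymbol{x}_n$ in general position. Compactness of the unit sphere then gives $c>0$ such that every $\boldsymbol{B}$ has at least $n-p\ge2$ clean rows with $\|\boldsymbol{B}^{\top}\boldsymbol{x}_i\|\ge c\|\boldsymbol{B}\|_F$. The paper's proof leans on the same unstated condition when it calls the clean likelihood a proper density in $\boldsymbol{B}$.

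Given that hypothesis, Step~3 is routine, and the difficulty is not where you placed it. Bound each clean factor by $|\boldsymbol{\Omega}|^{1/2}\min\{1,\nu/(\Lambda_{\min}(\boldsymbol{\Omega})\|\boldsymbol{r}_i\|^2)\}^{(\nu+q)/2}$, with $\boldsymbol{r}_i=\boldsymbol{y}_i-\boldsymbol{B}^{\top}\boldsymbol{x}_i$. This yields $\sup_{\boldsymbol{B}}(1+\|\boldsymbol{B}\|_F)^{\nu+q+1}L_{-1}\lesssim|\boldsymbol{\Omega}|^{(n-1)/2}\{1+\Lambda_{\min}(\boldsymbol{\Omega})^{-(\nu+q+1)/2}\}$. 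The prior on $\boldsymbol{B}$ is proper and independent of $\boldsymbol{\Omega}$, so this supremum disposes of the $\boldsymbol{B}$-integral.

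Your count omits that the $\boldsymbol{B}$-decay costs extra powers of $\Lambda_{\min}^{-1}$. The requirement therefore becomes roughly $n\ge2(\nu+q)+1$, not $n-1>\nu+q$. Finally, $|\boldsymbol{\Omega}|^{c}\Lambda_{\min}(\boldsymbol{\Omega})^{-d}\le(\tr\boldsymbol{\Omega})^{qc-d}$ for $c\ge d\ge0$. This is integrable against the $\exp(-\theta\tr\boldsymbol{\Omega}/2)$ factor of the prior, so both the near-singular and the large-$\boldsymbol{\Omega}$ regions are immediate.
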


\begin{proof}
Integrating the first weight $w_1$ against its $\mathrm{Ga}(\nu/2,\nu/2)$ prior
shows that the first observation contributes to the likelihood the bounded factor
$\{1+\nu^{-1}(\boldsymbol{y}_1(a)-\boldsymbol{B}^{\top}\boldsymbol{x}_1)^{\top}
\boldsymbol{\Omega}(\boldsymbol{y}_1(a)-\boldsymbol{B}^{\top}\boldsymbol{x}_1)
\}^{-(\nu+q)/2}\le 1$ for every $a$, a multivariate-$t$ kernel. The marginal
likelihood of $(\boldsymbol{B},\boldsymbol{\Omega})$ is therefore bounded above,
uniformly in $a$, by that of the remaining $n-1$ uncontaminated observations,
which under (A2)--(A3) is a proper density with exponential tails in
$\|\boldsymbol{B}\|_F$ and $\|\boldsymbol{\Omega}\|_F$; combined with the proper
prior of Section~\ref{sec:prior} this yields a posterior whose tails are
dominated, uniformly in $a$, by an integrable envelope, giving tightness and the
uniform moment bound.
\end{proof}

Write the posterior mean of $\boldsymbol{B}$ as a ratio of integrals over
$(\boldsymbol{B},\boldsymbol{\Omega},\boldsymbol{w})$, and isolate the first
observation's contribution. Conditional on $(\boldsymbol{B},\boldsymbol{\Omega})$,
the first weight has full conditional
$w_1\mid\cdot\sim\mathrm{Ga}\{(\nu+q)/2,\ (\nu+d_1(a))/2\}$ with
$d_1(a)=(\boldsymbol{y}_1(a)-\boldsymbol{B}^{\top}\boldsymbol{x}_1)^{\top}
\boldsymbol{\Omega}(\boldsymbol{y}_1(a)-\boldsymbol{B}^{\top}\boldsymbol{x}_1)$.
Since $\boldsymbol{\Omega}\succ0$ we have $\boldsymbol{u}^{\top}\boldsymbol{\Omega}
\boldsymbol{u}\ge\Lambda_{\min}(\boldsymbol{\Omega})>0$, so
$d_1(a)=\boldsymbol{u}^{\top}\boldsymbol{\Omega}\boldsymbol{u}\,a^2+O(a)$ grows
quadratically as $|a|\to\infty$. Hence $E[w_1\mid\cdot]=
(\nu+q)/(\nu+d_1(a))\to 0$ at rate $a^{-2}$, and the contribution of the first
observation to the coefficient full conditional \eqref{eq:B-update}, which enters
only through $w_1\boldsymbol{x}_1\boldsymbol{y}_1(a)^{\top}=O(w_1 a)=O(a^{-1})$,
vanishes. Marginalising over $(\boldsymbol{B},\boldsymbol{\Omega})$ --- whose
posterior is tight uniformly in $a$ by Lemma~\ref{lem:tight} --- and applying
dominated convergence with the integrable envelope of Lemma~\ref{lem:tight},
which the log-regularly-varying tail of (A4) supplies
\citep{desgagne2015robustness,gagnon2020new}, shows that
$\widehat{\boldsymbol{B}}(a)$ converges to the posterior mean computed from the
remaining observations as $|a|\to\infty$; in particular the limit is finite and
the difference in the statement tends to zero. Under the Gaussian error the weight
is fixed at $w_1\equiv1$, the term $\boldsymbol{x}_1\boldsymbol{y}_1(a)^{\top}$
grows linearly in $a$, and $\widehat{\boldsymbol{B}}(a)$ diverges at rate $|a|$.
\hfill$\square$

\section*{Funding}
This research was supported in part by the Iran National Science Foundation
(INSF) grant No.~4015320.

\section*{Data and code availability}
All data used in this paper are publicly available: the FRED-MD database is
distributed by the Federal Reserve Bank of St.\ Louis, and the S\&P~500 returns
ship with the \texttt{huge} \textsf{R} package \citep{zhao2012huge}. \textsf{R}
code reproducing every table and figure, including the sampler, the variational
algorithm and the data-processing scripts, is available at
\url{https://github.com/M-Arashi/Robust-MVR}.

\section*{Disclosure of AI tools}
The authors used an AI-based assistant to support software prototyping,
debugging of the \textsf{R} implementation, and language editing of the
manuscript. All modelling decisions, derivations, analyses and conclusions are
the authors' own, and the authors take full responsibility for the content.

\bibliography{rhsmr}

@article{bai2018high,
  author = {Bai, Ray and Ghosh, Malay},
  title = {High-dimensional multivariate posterior consistency under global--local shrinkage priors},
  journal = {Journal of Multivariate Analysis},
  volume = {167},
  pages = {157--170},
  year = {2018},
  publisher = {Elsevier}
}

@article{zhang2019ultra,
  author = {Zhang, Yabo and Ghosh, Malay},
  title = {Ultra high-dimensional multivariate posterior contraction rate under shrinkage priors},
  journal = {arXiv preprint arXiv:1904.04417},
  year = {2019}
}

@article{bhadra2017horseshoeplus,
  author = {Bhadra, Anindya and Datta, Jyotishka and Polson, Nicholas G. and Willard, Brandon},
  title = {The horseshoe+ estimator of ultra-sparse signals},
  journal = {Bayesian Analysis},
  volume = {12},
  number = {4},
  pages = {1105--1131},
  year = {2017}
}

@article{deshpande2019simultaneous,
  author = {Deshpande, Sameer K. and Ro{\v{c}}kov{\'a}, Veronika and George, Edward I.},
  title = {Simultaneous variable and covariance selection with the multivariate spike-and-slab {LASSO}},
  journal = {Journal of Computational and Graphical Statistics},
  volume = {28},
  number = {4},
  pages = {921--931},
  year = {2019},
  publisher = {Taylor \& Francis}
}

@article{shen2024posterior,
  author = {Shen, Yunyi and Deshpande, Sameer K.},
  title = {Posterior contraction and uncertainty quantification for the multivariate spike-and-slab {LASSO}},
  journal = {arXiv preprint arXiv:2209.04389},
  year = {2024}
}

@article{li2019graphical,
  author = {Li, Yunfan and Craig, Bruce A. and Bhadra, Anindya},
  title = {The graphical horseshoe estimator for inverse covariance matrices},
  journal = {Journal of Computational and Graphical Statistics},
  volume = {28},
  number = {3},
  pages = {747--757},
  year = {2019},
  publisher = {Taylor \& Francis}
}

@article{sagar2024precision,
  author = {Sagar, Ksheera and Banerjee, Sayantan and Datta, Jyotishka and Bhadra, Anindya},
  title = {Precision matrix estimation under the horseshoe-like prior--penalty dual},
  journal = {Electronic Journal of Statistics},
  year = {2024}
}

@article{gagnon2020new,
  author = {Gagnon, Philippe and Desgagn{\'e}, Alain and B{\'e}dard, Mylene},
  title = {A new {B}ayesian approach to robustness against outliers in linear regression},
  journal = {Bayesian Analysis},
  volume = {15},
  number = {2},
  pages = {389--414},
  year = {2020}
}

@article{ohagan2012bayesian,
  author = {O'Hagan, Anthony and Pericchi, Luis},
  title = {Bayesian heavy-tailed models and conflict resolution: a review},
  journal = {Brazilian Journal of Probability and Statistics},
  volume = {26},
  number = {4},
  pages = {372--401},
  year = {2012}
}

@article{desgagne2015robustness,
  author = {Desgagn{\'e}, Alain},
  title = {Robustness to outliers in location--scale parameter model using log-regularly varying distributions},
  journal = {The Annals of Statistics},
  volume = {43},
  number = {4},
  pages = {1568--1595},
  year = {2015}
}

@article{mccracken2016fred,
  author = {McCracken, Michael W. and Ng, Serena},
  title = {{FRED-MD}: A monthly database for macroeconomic research},
  journal = {Journal of Business \& Economic Statistics},
  volume = {34},
  number = {4},
  pages = {574--589},
  year = {2016},
  publisher = {Taylor \& Francis}
}

@article{carvalho2010horseshoe,
  author = {Carvalho, Carlos M. and Polson, Nicholas G. and Scott, James G.},
  title = {The horseshoe estimator for sparse signals},
  journal = {Biometrika},
  volume = {97},
  number = {2},
  pages = {465--480},
  year = {2010},
  publisher = {Oxford University Press}
}

@article{castillo2015bayesian,
  author = {Castillo, Isma{\"e}l and Schmidt-Hieber, Johannes and van der Vaart, Aad},
  title = {Bayesian linear regression with sparse priors},
  journal = {The Annals of Statistics},
  volume = {43},
  number = {5},
  pages = {1986--2018},
  year = {2015}
}

@article{rothman2010sparse,
  author = {Rothman, Adam J. and Levina, Elizaveta and Zhu, Ji},
  title = {Sparse multivariate regression with covariance estimation},
  journal = {Journal of Computational and Graphical Statistics},
  volume = {19},
  number = {4},
  pages = {947--962},
  year = {2010},
  publisher = {Taylor \& Francis}
}

@article{bhattacharya2015dirichlet,
  author = {Bhattacharya, Anirban and Pati, Debdeep and Pillai, Natesh S. and Dunson, David B.},
  title = {Dirichlet--Laplace priors for optimal shrinkage},
  journal = {Journal of the American Statistical Association},
  volume = {110},
  number = {512},
  pages = {1479--1490},
  year = {2015},
  publisher = {Taylor \& Francis}
}

@article{makalic2016simple,
  author = {Makalic, Enes and Schmidt, Daniel F.},
  title = {A simple sampler for the horseshoe estimator},
  journal = {IEEE Signal Processing Letters},
  volume = {23},
  number = {1},
  pages = {179--182},
  year = {2016},
  publisher = {IEEE}
}

@article{sahu2003new,
  author = {Sahu, Sujit K. and Dey, Dipak K. and Branco, M{\'a}rcia D.},
  title = {A new class of multivariate skew distributions with applications to {B}ayesian regression models},
  journal = {Canadian Journal of Statistics},
  volume = {31},
  number = {2},
  pages = {129--150},
  year = {2003},
  publisher = {Wiley Online Library}
}

@article{azzalini2003distributions,
  author = {Azzalini, Adelchi and Capitanio, Antonella},
  title = {Distributions generated by perturbation of symmetry with emphasis on a multivariate skew $t$-distribution},
  journal = {Journal of the Royal Statistical Society: Series B},
  volume = {65},
  number = {2},
  pages = {367--389},
  year = {2003},
  publisher = {Wiley Online Library}
}

@article{lange1989robust,
  author = {Lange, Kenneth L. and Little, Roderick J. A. and Taylor, Jeremy M. G.},
  title = {Robust statistical modeling using the $t$ distribution},
  journal = {Journal of the American Statistical Association},
  volume = {84},
  number = {408},
  pages = {881--896},
  year = {1989},
  publisher = {Taylor \& Francis}
}

@article{geweke1993bayesian,
  author = {Geweke, John},
  title = {Bayesian treatment of the independent {S}tudent-$t$ linear model},
  journal = {Journal of Applied Econometrics},
  volume = {8},
  number = {S1},
  pages = {S19--S40},
  year = {1993},
  publisher = {Wiley Online Library}
}

@article{fernandez1998bayesian,
  author = {Fern{\'a}ndez, Carmen and Steel, Mark F. J.},
  title = {On {B}ayesian modeling of fat tails and skewness},
  journal = {Journal of the American Statistical Association},
  volume = {93},
  number = {441},
  pages = {359--371},
  year = {1998},
  publisher = {Taylor \& Francis}
}

@article{friedman2008sparse,
  author = {Friedman, Jerome and Hastie, Trevor and Tibshirani, Robert},
  title = {Sparse inverse covariance estimation with the graphical lasso},
  journal = {Biostatistics},
  volume = {9},
  number = {3},
  pages = {432--441},
  year = {2008},
  publisher = {Oxford University Press}
}

@article{ghosal2000convergence,
  author = {Ghosal, Subhashis and Ghosh, Jayanta K. and van der Vaart, Aad W.},
  title = {Convergence rates of posterior distributions},
  journal = {The Annals of Statistics},
  volume = {28},
  number = {2},
  pages = {500--531},
  year = {2000}
}

@article{polson2010shrink,
  author = {Polson, Nicholas G. and Scott, James G.},
  title = {Shrink globally, act locally: sparse {B}ayesian regularization and prediction},
  journal = {Bayesian Statistics},
  volume = {9},
  pages = {501--538},
  year = {2010},
  publisher = {Oxford University Press}
}

@article{piironen2017sparsity,
  author = {Piironen, Juho and Vehtari, Aki},
  title = {Sparsity information and regularization in the horseshoe and other shrinkage priors},
  journal = {Electronic Journal of Statistics},
  volume = {11},
  number = {2},
  pages = {5018--5051},
  year = {2017}
}

@article{yuan2007model,
  author = {Yuan, Ming and Lin, Yi},
  title = {Model selection and estimation in the {G}aussian graphical model},
  journal = {Biometrika},
  volume = {94},
  number = {1},
  pages = {19--35},
  year = {2007},
  publisher = {Oxford University Press}
}

@article{zhao2012huge,
  author = {Zhao, Tuo and Liu, Han and Roeder, Kathryn and Lafferty, John and Wasserman, Larry},
  title = {The huge package for high-dimensional undirected graph estimation in {R}},
  journal = {Journal of Machine Learning Research},
  volume = {13},
  pages = {1059--1062},
  year = {2012}
}

@article{datta2013asymptotic,
  author = {Datta, Jyotishka and Ghosh, Jayanta K.},
  title = {Asymptotic properties of {B}ayes risk for the horseshoe prior},
  journal = {Bayesian Analysis},
  volume = {8},
  number = {1},
  pages = {111--132},
  year = {2013}
}

@article{vanderpas2014horseshoe,
  author = {van der Pas, Stephanie L. and Kleijn, Bas J. K. and van der Vaart, Aad W.},
  title = {The horseshoe estimator: posterior concentration around nearly black vectors},
  journal = {Electronic Journal of Statistics},
  volume = {8},
  number = {2},
  pages = {2585--2618},
  year = {2014}
}

\end{document}